\DeclareMathOperator*{\argmax}{arg\,max}
\newtheorem{theorem}{Theorem}
\newtheorem{define}{Definition}
\newtheorem{lemma}{Lemma}
\newtheorem{assume}{Assumption}
\newtheorem{coro}{Corollary}
\title{Sequential Persuasion Using Limited Experiments
}
\author{
  Bonan Ni \\
  Tsinghua University
   \And
  Weiran Shen \\
  Renmin University of China
   \And
  Pingzhong Tang \\
  Tsinghua University and TuringSense
}
\begin{document}
\maketitle

\begin{abstract}
Bayesian persuasion and its derived information design problem has been one of the main research agendas in the economics and computation literature over the past decade. However,
when attempting to apply its model and theory, one is often limited by the fact that the sender can only implement very restricted information structures. Moreover, in this case, the sender can possibly achieve higher expected utility by performing a sequence of feasible experiments,
where the choice of each experiment depends on the outcomes of all previous experiments. Indeed, it has been well observed that real life persuasions often take place in rounds during which the sender exhibits experiments/arguments sequentially.

We study the sender's expected utility maximization using finite and infinite sequences of experiments. For infinite sequences of experiments, we characterize the supremum of the sender's expected utility using a function that generalizes the concave closure definition in the standard Bayesian persuasion problem. With this characterization, we first study a special case where the sender can use feasible experiments to achieve the optimal expected utility of the standard Bayesian persuasion without feasibility constraints, which is a trivial utility upper bound, and establish structural findings about the sender's optimal sequential design in this case. Then we derive conditions under which the sender's optimal sequential design exists; when an optimal sequential design exists, there exists an optimal design that is Markovian, i.e., the choice of the next experiment only depends on the receiver's current belief.
\end{abstract}

\section{Introduction}

Information design studies how the choice of one or more senders' information disclosure affects one or more receivers' actions, which in turn affects the senders' realized utilities. Bayesian persuasion is a canonical model of information design \cite{bayesian-persuasion}, which studies the interaction between one sender and one receiver. The sender announces and publicly performs an experiment whose outcome depends on an unknown world state. After knowing the experiment outcome, the receiver updates its belief, which is a distribution of the world state, using the Bayes rule, Then the receiver chooses an action to maximize their own expected utility under the posterior distribution of the world state. As shown by \citet{bayesian-persuasion}, every experiment can be characterized by a distribution over the receiver's posterior beliefs which is Bayes plausible, that is, the expectation of the distribution is equal to the prior distribution. The sender can choose among all Bayes plausible distributions over posteriors. A mapping from the posterior to the sender's utility summarizes the receiver's optimal action choice and the random realization of the world state. under an assumption about the receiver's tie-breaking rule, the sender's optimal experiment choice exists, and the optimal expected utility is given by the concave closure of the mapping from the posterior to the sender's utility. 

In practice, however, it is possible that the sender's optimal experiment is infeasible to implement, as recently mentioned by \citet{open-issue}. Various factors may restrict the sender's feasible experiments. First of all, the sender's technology and resource may be limited, so they cannot implement experiments that arbitrarily disclose information. Secondly, the receiver's technology to interpret the experiment may also be limited, or the receiver may incur a cost of interpretation, so that it is only incentivized to correctly update its belief for a particular, easy to interpret, subset of experiments. Various regulations and restrictions by third parties may also limit the sender's implementable experiments. For example, privacy laws may restrict the informativeness of a data provider's service. All these possibilities motivates us to study a more general problem of the sender's expected utility maximization using a limited set of feasible experiments. In this paper, we limit the set of feasible experiments by considering a subset of all distributions over the posteriors, so that the sender can only take experiments in the given subset. As we discuss in Section \ref{sec-restrict-discuss}, limiting feasible experiments to a particular set of distributions over posteriors is more general than limiting the feasible signaling schemes, which can be described by a set of probabilities of sending different signals in each world state. To see this, a set of distributions over posteriors represents not only feasible signaling schemes but also the set of applicable priors for each feasible signaling scheme. For example, whether a data provider is allowed to disclose the zip code of a certain user can depend on the population of corresponding users: For a very smaller number of users living in very different areas, the zip code can be used to exactly identify a user.

With a restricted subset of feasible experiments, the sender can possibly obtain higher expected utilities by taking more than one experiment. 
We consider the case where the sender takes multiple experiments sequentially, where the choice of every experiment depends on all the experiments taken previously and their outcomes. For a sequence of experiments taken by the sender, the receiver sequentially updates its belief according to the outcome of each experiment using the Bayes rule, and chooses an action after the sequence of experiments finishes. An experiment in a sequence is Bayes plausible if its expectation is equal to the belief induced by the experiment preceding it, and the sender can take an extra experiment after all previous experiments as long as the new experiment is both feasible and Bayes plausible. 

The theme of the paper is how the sender can maximize their expected utility using sequential choices of experiments. We study the sender's utility maximization problem from the following three perspectives:
\begin{itemize}
    \item How much expected utility can the sender obtain?
    \item Does the sender's optimal sequential choice rule exist?
    \item When the optimal choice rule exists, what structural findings can be drawn from it?
\end{itemize}
We answer the first question by studying the supremum of the sender's expected utility over all sequential choices of experiments, which is not necessarily attainable by some particular choice rule. When the sender can take arbitrarily many experiments, we provide a characterization of the supremum utility in Theorem \ref{thm-limit-characterize}, which is in the form of a generalization of the concave envelope for the Bayesian persuasion model. Then we address the second and the third question for a special case, where some optimal choice rule achieves an expected utility that is equal to the optimal one of Bayesian persuasion without feasibility constraints, so that the sender can use multiple feasible experiments to ``implement'' an optimal information structure which is not necessarily in the feasible set. We utilize an assumption that the set of feasible experiments is closed, and discover connections between the implementation of the optimal information structure and the sender's ability to lead the receiver's belief to some special destinations in the belief space with probability arbitrarily close to 1. Such connections are summarized in Theorem \ref{thm-implement}. The last part of this paper addresses the second and the third questions by deriving conditions for the existence of the sender's optimal sequential choice rule, and show that when some optimal choice rule exists, there exists an optimal choice rule which is Markovian, that is, the choices only depend on the receiver's current belief. Under the assumption of closed feasible sets, we derive a sufficient and necessary condition for the existence of the sender's optimal choice rule, which is stated as Theorem \ref{thm-opt-exist}.

\section{Related works}

Information design is an important topic of the economics and computation literature. Methodological findings establish linkage between information structure and the solution concepts of game theory \cite{unified-perspective, BCE}, as well as other general theoretical structures and properties \cite{concave-game,population-belief}. The design of information structure brings new possibility to classic models, including auctions \cite{fist-price,auction-inter,Myerson} and congestion games \cite{congestion-1,congestion-2}. Some other works study the design and pricing of the sender's signal \cite{selling-1,selling-2}.

It is mentioned by \citet{open-issue} that, the sender can have a constrained set of feasible information structures under various circumstances in practice, including bank stress tests \cite{bank-1,bank-2,bank-3}, quality certification \cite{quality-test-1,quality-test-2}, and clinical trials \cite{clinic-trail-1}. A more abstract feasibility constraint arises from a communication channel with limited capacity, which constraints the informativeness of the sender's feasible information structure \cite{noisy-persuasion,limited-entropy}. In addition, the sender can only use certain constrained information structures if it has chance to misreport the experiment outcome, therefore the equilibrium information structures are restricted by the sender's incentive compatibility \cite{cheap-talk-1,cheap-talk-2,cheap-talk-3} or credibility constraints \cite{credible-persuasion}. In contrast to these models with relatively specific feasibility constraints, some other works study the sender's persuasion problem with the set of feasible experiments restricted in more general ways. \citet{experimental-persuasion} study the sender's optimal persuasion using a given subset of all message schemes, and an alternative setting where the sender can garble the chosen experiment's result in a committed way. \citet{post-ante} study the sender's optimal and approximately optimal information structure under ex post and ex ante feasibility constraints. 

A number of works study sequential persuasion under various settings. \citet{sequential-trial} combine multi-phase experiments with restrictions on the sender's feasible information structure. In contrast to their work, we do not assume binary world state and allow the sender to take an arbitrary number of experiments, and we restrict feasible information structure using an given set of feasible experiments rather than some exogenously determined experiments in the sequence. A widely studied case where sequential persuasion can be necessary is the information design with multiple senders, for which the equilibrium can be given by multiple experiments sequentially chosen by different senders \cite{multiple-sender-1,multiple-sender-2,multiple-sender-3,multiple-sender-4,multiple-sender-5}. \citet{bilateral-trade} study multi-round communication between agents which improves the efficiency of a bilateral trade. \citet{private-experiment} study the sender's sequential private experiments with strategic disclosure to the receiver. Sequential persuasion also arises when the receiver interacts with a dynamic environment for multiple rounds \cite{dynamic-1}.

\section{Preliminary}

A finite set $\Omega$ contains all possible world states. Denote by $\Delta(\Omega)$ the set of probability distributions over $\Omega$. A world state $\omega \in \Omega$ is drawn from distribution $\mu \in \Delta(\Omega)$. Consider total variation distance on $\Delta(\Omega)$ to obtain a compact metric space, and denote by $\Delta(\Delta(\Omega))$ the set of Borel probability measures on $\Delta(\Omega)$. Throughout this paper we focus on those finite-support elements of $\Delta(\Delta(\Omega))$, which can be represented as follows: for integer $m \ge 1$, denote by $[m]$ the set $\{1, 2, \ldots, m\}$ (in particular $[0] = \emptyset$), then the set of probability distributions over $\Delta(\Omega)$ with finite support is given by:
\[
    F_0 \coloneqq \left\{(\lambda_j, p_j)_{j \in [m]} \in ((0, 1] \times \Delta(\Omega))^{m} ; m \ge 1, \sum_{j \in [m]} \lambda_j = 1, p_j \ne p_{j'} \text{ for every }1 \le j < j' \le m\right\}.
\]
Note that for convenience and without loss of generality, we let $p_j \ne p_{j'}$, but do not impose any restrictions on the ordering of index $j$. 

A sender and a receiver share $\mu$ as their prior. The sender can take some experiments whose outcome depends on the world state, and the receiver updates their belief according to the experiment outcome using the Bayes rule. By \citet{bayesian-persuasion}, an experiment is given by a mean-preserving spread of the Dirac distribution $\delta_p$ that assigns probability 1 to the receiver's original belief $p \in \Delta(p)$. For convenience, we use the term ``a mean-preserving spread of $p$'' to refer to a mean-preserving spread of the Dirac distribution $\delta_p$, and say an experiment spreads $p$ if it is a mean-preserving spread of $\delta_p$. For every $e = (\lambda_j, p_j)_{j \in [m]} \in F_0$, denote by $\tau(e) \coloneqq \{p_j; j \in [m]\}$ the support of $e$, and by $\sigma(e) \coloneqq \sum_{j \in [m]} \lambda_jp_j$ the expectation of $e$. Given the receiver's belief $p_0 \in \Delta(\Omega)$ before the experiment, the set of finite-support mean-preserving spreads of $p_0$ is given by
\[
    \{e \in F_0; \sigma(e) = p_0\},
\]
and after the experiment $e = (\lambda_j, p_j)_{j \in [m]}$, the receiver's new belief is given by: for every $j \in [m]$, the new belief becomes $p_j$ with probability $\lambda_j$. The outcome of experiment $e$ is the index $j$ whose corresponding $p_j$ becomes the receiver's new belief.

\subsection{Feasible experiments}
Throughout this paper, we restrict the sender's feasible experiments to some $F \subseteq F_0$. Therefore, when the receiver's belief is $p_0 \in \Delta(\Omega)$, the set of experiments the sender can take is given by
\[
    \mathcal{F}(p_0) \coloneqq \{e \in F; \sigma(e) = p_0 \}.
\]
When $e = (1, p)$ for some $p \in \Delta(\Omega)$, the experiment is called a trivial experiment because it does not reveal any additional information. Define $T \coloneqq \{(1, p) | p \in \Delta(\Omega)\}$ to be the set of trivial experiments. Note that a trivial experiment is equivalent to not taking any experiment which is always feasible for the sender. Therefore, without loss of generality, we assume $T \subseteq F$.

\subsection{Sequential persuasions}
To compensate for the loss of restricting the feasible information structure to $F$, we allow the sender to choose a sequence of experiments in $F$, where the choice of each experiment depends on the outcome of all previous experiments. 
\begin{define}
    For any $n \in \mathbb{N}$, an $n$-step history is a finite sequence $(p_0, e_1, p_1, \ldots, e_n, p_n)$, where $(p_i)_{0 \le i \le n}$ and $(e_i)_{i \in [n]}$ are separately sequences in $\Delta(\Omega)$ and $F$, and $\sigma(e_i) = p_{i-1}$ holds for every $i \in [n]$.
\end{define}
For every $\xi = (p_0, e_1, p_1, \ldots, e_n, p_n)$, we define operator $\text{last}(\xi) \coloneqq p_n$. For two sequences $s_1$ and $s_2$, we use $s_1 \oplus s_2$ to denote their concatenation. 
\begin{define}
    For any $n \in \mathbb{N}$, an $n$-step sequential persuasion starting from $\mu$ is a finite sequence $(\Xi_0, \phi_1, \Xi_1, \ldots, \phi_n, \Xi_{n})$ where:
    \begin{itemize}
        \item $\Xi_0 = \{(\mu)\}$ is the singleton set that contains the zero-step history with the starting point $\mu$. For every $i \in [n]$, $\Xi_i$ is a set of $i$-step histories;
        \item For every $i \in [n]$, $\phi_i: \Xi_{i-1} \mapsto F$ assigns to each $(i-1)$-step history the next experiment such that, for any $\xi \in \Xi_{i-1}$, $\phi_i(\xi) \in \mathcal{F}(\text{last}(\xi))$ holds;
        \item For every $i \in [n]$, $\Xi_{i} = \{ \xi \oplus (\phi_i(\xi), p); \xi \in \Xi_{i-1}, p \in \tau(\phi_i(\xi))\}$.
    \end{itemize}
    An infinite sequential persuasion starting from $\mu$ is an infinite sequence $(\Xi_0, \phi_1, \Xi_1, \ldots)$ that satisfies the above conditions for all $i \ge 1$.
\end{define}
For any given $(F, \mu)$, denote by $\mathcal{S}^{(n)}_\mu$ the set of $n$-step sequential persuasions starting from $\mu$, and by $\mathcal{S}_\mu$ the set of infinite sequential persuasions starting from $\mu$.

Although we allow the sender to perform infinite-length sequential persuasions, the sender cannot achieve its utility (to be defined later) unless the receiver chooses its action, which only happens if the sender finishes all experiments in finite steps. Therefore, we must define the termination of infinite sequential persuasions. The possibility of finite-step termination does not contradict to the infinite length of sequential persuasion: for example, the sequential persuasion can terminate in every step $i$ with probability $\delta \in (0,1)$, so that it almost surely terminates in finite steps, while the maximum number of possible steps is infinity.
\begin{define}\label{define-terminate}
    For any $n \in \mathbb{N}$ and $0 \le i \le n$, an $n$-step sequential persuasion $(\Xi_0, \phi_1, \Xi_1, \ldots, \phi_n, \Xi_n)$ terminates after history $\xi = (p_0, e_1, p_1, \ldots, e_i, p_i) \in \Xi_i$ if it constantly chooses the trivial persuasion afterward, that is, $\phi_{i+l}\big (\xi \oplus ((1, p_i), p_i)^{l-1} \big) = (1, p_i)$ for every $l \in [n-i]$. An infinite sequential persuasion $(\Xi_0, \phi_1, \Xi_1, \ldots)$ terminates after history $\xi = (p_0, e_1, p_1, \ldots, e_i, p_i) \in \Xi_i$ if $\phi_{i+l}\big (\xi \oplus ((1, p_i), p_i)^{l-1} \big) = (1, p_i)$ for every $l \ge 1$.
\end{define}
Here $((1, p_i), p_i)^{l-1}$ denotes the sequence of length $(2l-2)$ that repeats $((1, p_i), p_i)$ for $(l-1)$ times. Note that $[0] = \emptyset$, and by definition, every $n$-step sequential persuasion terminates after every $n$-step history in its $\Xi_n$.

An infinite sequential persuasion $(\Xi_0, \phi_1, \Xi_1, \ldots)$, or an $n'$-step sequential persuasion 
for some $n' \ge n$, introduces a random realization of $n$-step history by the randomness of experiment outcomes. The corresponding probability distribution over $\Xi_n$ is given by: for every $n$-step history $(p_0, e_1, p_1, \ldots, e_{n}, p_{n}) \in \Xi_n$, 
\begin{equation} \label{eq-history-distrib}
    \Pr[(p_0, e_1, p_1, \ldots, e_{n}, p_{n})] = \prod_{j=1}^{n} e_j(p_j),
\end{equation}
where with some slight abuse of notation, for every $e_i = (\lambda^{(i)}_j, p^{(i)}_j)_{j \in [m_i]}$, we define $e_i(p_i)$ to be $\lambda^{(i)}_j$ if $p_i = p^{(i)}_j$ for some $j$, and zero otherwise. Moreover, denote by $b_{n}$ the receiver's random belief after $n$ steps, the probability distribution of $b_n$ is induced by the probability distribution over $\Xi_n$:
\begin{equation} \label{eq-post-distrib}
    \Pr[b_n = p] = \sum_{\xi \in \Xi_{n}}\mathbb{I}[\text{last}(\xi) = p] \cdot \Pr[\xi].
\end{equation}
Note that all experiments taken have finite support, therefore every $\Xi_n$ is finite and the summation is always well-defined.

The distribution over $\Xi_n$ naturally induces a probability of termination after $n$ steps, and a probability of terminating at every particular belief $p \in \Delta(\Omega)$ after $n$ steps:
\begin{align*}
    & \Pr[S (S^{(n')}) \text{ terminates after } n \text{ steps}] = \sum_{\xi \in \Xi_n} \mathbb{I}[S (S^{(n')}) \text{ terminates after } \xi] \cdot \Pr[\xi], \\
    & \Pr[S (S^{(n')}) \text{ terminates after } n \text{ steps at belief } p] \\
    =& \sum_{\xi \in \Xi_n} \mathbb{I}[S (S^{(n')}) \text{ terminates after } \xi] \cdot \mathbb{I}[\text{last}(\xi) = p] \cdot \Pr[\xi].
\end{align*}
It is easy to show that the probability of termination after $n$ steps is increasing in $n$, therefore for any infinite sequential persuasion, the limiting probability of termination exists and is equal to the supremum:
\[
    \lim_{n \to \infty} \Pr[S \text{ terminates after } n \text{ steps}] = \sup_{n \in \mathbb{N}} \Pr[S \text{ terminates after } n \text{ steps}] 
\]

\subsection{Actions and utilities}
The receiver's action space is a compact set $A$. A bounded, continuous function $U: \Omega \times A \rightarrow \mathbb{R}$ gives the receiver's utility for every combination of action and world state. After knowing the outcome of all experiments to obtain some belief $p \in \Delta(\Omega)$, the receiver chooses action $\alpha(p) \in A$ to maximize its expected utility under $p$:
\[
    \alpha(p) \in \argmax_{a \in A} \sum_{\omega \in \Omega} p(\omega)U(\omega, a),
\]
where the existence of optimal action is guaranteed by the compactness of $A$. The receiver's action choice and utility realization can be summarized by the following mapping:
\[
    u(p) \coloneqq \sum_{\omega \in \Omega} p(\omega)U(\omega, \alpha(p)), \forall p \in \Delta(\Omega).
\]

The sender's utility is given by a bounded, continuous function $V: \Omega \times A \rightarrow [\underline V, \overline V]$ that maps the world state and the receiver's action choice to the sender's realized utility. 
We assume that $\underline V > 0$: Since the receiver only chooses action after the sender finishes all the experiments, the termination of the sender's sequential persuasion outweighs whether the exact value of the sender's realized utility is large enough. As shown in Lemma \ref{lem-AS-terminate}, even though the sender is allowed to take infinitely many experiments, they are incentivized to do so only with negligible probability. 
Also, we can assume that $\alpha$ breaks ties by choosing the candidate that the sender prefers most. Therefore, the receiver's action choice and the sender's utility realization can be summarized by the mapping 
\[
    v(p) \coloneqq \sum_{\omega \in \Omega} p(\omega)V(\omega, \alpha(p)), \forall p \in \Delta(\Omega),
\]
which also takes value in $[\underline V, \overline V]$ and is upper-semicontinuous, as shown by \citet{bayesian-persuasion}.

Throughout this paper, we focus on the sender's expected utility maximization for a given instance $(\mu, F, v)$. Our main results only assume bounded and upper-semicontinuous $v$, but the examples we provide contain $v$ that can be obtained from some trivial (and even finite) $A$ and the corresponding $(U,V)$.

Define function $\mathcal{V}$ that maps every sequential persuasion to its expected utility:
\begin{define}\label{define-func-v}
    The expected utility of an $n$-step sequential persuasion $S^{(n)} = (\Xi_0, \phi_1, \Xi_1, \ldots, \phi_n, \Xi_{n})$ is
    \[
        \mathcal{V} (S^{(n)}) \coloneqq \mathbb{E} [v(b_n)],
    \]
    where $b_n$'s distribution is given by Equation \eqref{eq-post-distrib}. The expected utility of an infinite sequential persuasion $S = (\Xi_0, \phi_1, \Xi_1, \ldots)$ is
    \[
        \mathcal{V} (S) \coloneqq \sup_{n \in \mathbb{N}} \sum_{\xi \in \Xi_n} \left\{\mathbb{I}[S \text{ terminates after } \xi] \cdot \Pr[\xi] \cdot v(\text{last}(\xi))\right\}.
    \]
\end{define}
With our assumption $\underline V > 0$, the following lemma tells that every infinite sequential persuasion which does not terminate with probability 1 is strictly suboptimal:
\begin{lemma}\label{lem-AS-terminate}
    For any $S \in \mathcal{S}_\mu$, if 
    \[
        \lim_{n \to \infty} \Pr[S \text{ terminates after } n \text{ steps}] < 1,
    \]
    then there exists $n \in \mathbb{N}$ and $S'^{(n)} \in \mathcal{S}^{(n)}_\mu$ such that $\mathcal{V}(S'^{(n)}) > \mathcal{V}(S)$.
\end{lemma}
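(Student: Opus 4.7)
The plan is to construct $S'^{(n)}$ as the truncation of $S$ to its first $n$ steps, i.e., $S'^{(n)} \coloneqq (\Xi_0, \phi_1, \Xi_1, \ldots, \phi_n, \Xi_n)$ for $n$ chosen large enough. This is automatically an element of $\mathcal{S}^{(n)}_\mu$, since beyond reusing the first $n$ components of $S$, the only extra requirement is that every history in $\Xi_n$ terminate, which holds by the parenthetical remark following Definition \ref{define-terminate}. The core intuition is that because $v \ge \underline V > 0$ pointwise and $S$ fails to terminate with strictly positive limiting probability, the $n$-step truncation harvests additional utility from the mass that $S$ would have left non-terminated.

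To formalize this, I would decompose $\mathcal{V}(S'^{(n)}) = T_n + N_n$, where
\[
T_n \coloneqq \sum_{\xi \in \Xi_n} \mathbb{I}[S \text{ terminates after } \xi] \cdot \Pr[\xi] \cdot v(\text{last}(\xi))
\]
is exactly the $n$th term inside the supremum defining $\mathcal{V}(S)$, and $N_n$ is the analogous sum over those histories after which $S$ has not yet terminated. By definition $\mathcal{V}(S) = \sup_n T_n$, so there exist indices $n$ with $T_n$ arbitrarily close to $\mathcal{V}(S)$. Meanwhile, $N_n \ge \underline V \cdot \bigl(1 - \Pr[S \text{ terminates after } n \text{ steps}]\bigr)$ by the pointwise lower bound on $v$, and since $\Pr[S \text{ terminates after } n \text{ steps}]$ is nondecreasing and converges to $q \coloneqq \lim_{n \to \infty} \Pr[S \text{ terminates after } n \text{ steps}] < 1$, we obtain $N_n \ge \underline V (1-q) > 0$ uniformly in $n$.

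Combining the two bounds, I would choose $n$ with $T_n > \mathcal{V}(S) - \underline V(1-q)/2$; then
\[
\mathcal{V}(S'^{(n)}) = T_n + N_n > \mathcal{V}(S) - \underline V(1-q)/2 + \underline V(1-q) = \mathcal{V}(S) + \underline V(1-q)/2 > \mathcal{V}(S),
\]
which is the desired strict improvement. The only delicate step is the bookkeeping that the truncated $S'^{(n)}$ is genuinely an $n$-step sequential persuasion that collects utility on every history in $\Xi_n$, not merely on the terminated ones, because an $n$-step persuasion by definition treats step $n$ as its final step. Once this asymmetry between the definitions of $\mathcal{V}$ on $\mathcal{S}^{(n)}_\mu$ and on $\mathcal{S}_\mu$ is made precise, the standing hypothesis $\underline V > 0$ immediately converts any leftover non-termination mass into a strict utility improvement over $\mathcal{V}(S)$.
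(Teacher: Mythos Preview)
Your proposal is correct and follows essentially the same approach as the paper: truncate $S$ to its first $n$ steps and exploit that the non-terminated mass, bounded below by $1-q>0$, contributes at least $\underline V(1-q)$ to $\mathcal{V}(S'^{(n)})$ while contributing nothing to $\mathcal{V}(S)$. Your bookkeeping via $T_n+N_n$ and the direct use of $\mathcal{V}(S)=\sup_n T_n$ to pick $n$ with $T_n>\mathcal{V}(S)-\underline V(1-q)/2$ is in fact slightly cleaner than the paper's version, which instead bounds $\mathcal{V}(S)-T_{n_\epsilon}$ by $\epsilon\,\overline V$ via the tail termination probability and then chooses $\epsilon$ small enough that $\epsilon\,\overline V<(\delta+\epsilon)\,\underline V$.
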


\subsection{Discussion}
\subsubsection{Restriction over experiments}\label{sec-restrict-discuss}
In the standard Bayesian persuasion model, an experiment is constructed by a signaling scheme $\pi: \Omega \rightarrow \Delta(\mathcal{D})$ for some dictionary of message values $\mathcal{D}$, and the receiver updates its belief based on the realized message using the Bayes rule. For the receiver's initial belief $p$, any message scheme with a finite $\mathcal{D}$ generates a new belief according to a mean-preserving spread of $p$, which is an element of $F_0$. 

Instead of setting restrictions on feasible $\pi$'s, we restrict the sender's feasible experiments by subset $F \subseteq F_0$ because such restrictions are more general. If we only restrict the set of feasible $\pi$ with a finite dictionary, by default each $\pi$ can be used to spread every $p \in \Delta(\Omega)$ and obtain a set of elements in $F_0$. Therefore, every set of feasible $\pi$ corresponds to a subset of $F_0$, but not every $F \subseteq F_0$ can be given by a set of feasible $\pi$. In general, a subset $F \subseteq F_0$ can be given by the combination of (i) a set of feasible message schemes and (ii) for each message scheme $\pi$, a subset $D_\pi \subseteq \Delta(\Omega)$ restricting the set of beliefs that the sender can use $\pi$ to spread. 

\subsubsection{Dependence on history}\label{sec-discuss-markov}
In our definition of sequential persuasion, we allow the sender's experiment choice $\phi_j, j \ge 1$ to depend on the full history of past experiments and outcomes. We show in Theorem \ref{thm-opt-exist} that, when some optimal infinite sequential persuasion exists, there exists an optimal infinite sequential persuasion whose experiment choices only depend on the receiver's current belief, i.e. $\phi_j(\xi) = \phi_{j'}(\xi')$ as long as $\text{last}(\xi) = \text{last}(\xi')$, which we call a ``Markov'' sequential persuasion. 

For any fixed $n > 1$, the optimal $n$-step sequential persuasion in general has to choose experiments according to not only the current belief, but also the number of steps $j$. To see this, consider the following example: 

Let $\Omega = \{0,1\}$, so that we can represent every belief $p \in \Delta(\Omega)$ by $t_p = p(0) \in [0,1]$. Let $v(t) = \mathbb{I}[t=0] + \mathbb{I}[t=1], \forall t \in [0,1]$. $\mu = \frac13$. In addition to all trivial experiments, $F$ contains four elements: $e_1$ spreads $\frac13$ into $0$ and $\frac23$ both w.p. $\frac12$, $e_2$ spreads $\frac13$ into $0$, $\frac12$ and $1$ separately w.p. $\frac12$, $\frac13$ and $\frac16$, and the two symmetric experiments: $e_3$ spreads $\frac23$ into $\frac13$ and $1$ both w.p. $\frac12$, $e_4$ spreads $\frac23$ into $0$, $\frac12$ and $1$ separately w.p. $\frac16$, $\frac13$ and $\frac12$. For any $n$, the optimal $n$-step sequential persuasion can be characterized by: For $j < n$, take $e_1$ if $t_{p_j} = \frac13$, take $e_3$ if $t_{p_j} = \frac23$. For $j = n$, take $e_2$ if $t_{p_j} = \frac13$, take $e_4$ if $t_{p_j} = \frac23$.

\subsubsection{Utility after termination}\label{sec-terminate-utility}
We require an infinite sequential persuasion to realize utility only after termination. An alternative approach to define infinite-round utility is  
\[
    \mathcal{V'} (S) \coloneqq \sup_{n \in \mathbb{N}} \mathbb{E} [v(b_n)].
\]
In fact, the supremum of infinite-round utility $v_\infty$, to be defined in Section \ref{sec-limit-function}, does not change if we replace $\mathcal{V}$ by $\mathcal{V'}$ in its definition, as the functions $(v_n)_{n \in \mathbb{N}}$ and $v_\infty$ satisfy Lemma \ref{lem-iterate}, \ref{lem-converge} and Theorem \ref{thm-limit-characterize} for both definitions of $v_\infty$. The distinction between the two definitions affects the existence of an optimal infinite sequential persuasion. This can be illustrated by the following example, in which an infinite sequential persuasion maximizes $\mathcal{V'}$, but the supremum of $\mathcal{V}$ is not attainable by any infinite sequential persuasion: 

Let $\Omega = \{0,1\}$, and each belief $p \in \Delta(\Omega)$ is represented by $p(0)$. $v(t) = 2\left|t-\frac12\right|, \forall t \in [0,1]$. $\mu = \frac12$. $F$ is given by:
\[
    F \coloneqq T \cup \{(\lambda_j, p_j)_{j = 0, 1} \in F_0; 0 < p_0 < p_1 < 1, H_2(\lambda_0 p_0 + \lambda_1 p_1) = 2(\lambda_0 H_2(p_0) + \lambda_1 H_2(p_1))\},
\]
where $H_2(t) \coloneqq -t \log t - (1-t) \log (1-t), t \in (0,1)$ is the entropy of a binary variable. For any infinite sequential persuasion that keeps choosing nontrivial experiments, the expected entropy of $\omega$ decreases by half for every step. By taking enough steps of nontrivial experiments and then terminating, the expected utility given by both $\mathcal{V}$ and $\mathcal{V'}$ can be arbitrarily close to 1, therefore an infinite sequential persuasion is optimal if and only if its expected utility is equal to 1.

Any $S$ with $\mathcal{V}(S) > 0$ has to terminate with probability $\delta>0$ at some belief $0 < t < 1$. Therefore 
\[
    \mathcal{V}(S) \le 1 - \delta \cdot \left(1- 2\left|t - \frac12\right|\right) < 1,
\]
and such $S$ is sub-optimal. However, consider an infinite sequential persuasion $S'$ that keeps choosing nontrivial experiments and never terminates: one can easily see that $\mathcal{V'}(S') = 1$ and $S'$ maximizes $\mathcal{V'}$.

\section{Utility supremum of sequential persuasions}\label{sec-limit-function}
When $F = F_0$, any finite-step sequential persuasion can be merged into a single finite-support experiment in $F_0$. For an upper-semicontinuous $v$, the optimal expected utility over all information structures is given by the concavification of $v$, which can be attained by an experiment with support size at most $|\Omega|$. Therefore, the sender only needs to use a single experiment in $F$. When $F \subsetneq F_0$, however, it is generally necessary to use sequential persuasions to achieve a higher expected utility. This immediately raises the following question: how much utility can the sender achieve using sequential persuasions? we answer this question by showing that the supremum of the sender's expected utility is the minimum of a set of real-valued functions above $v$, which is in the form of a generalization of the concave envelope function.

\begin{define}
    For every $n \in \mathbb{N}$, $v_n: \Delta(\Omega) \rightarrow \mathbb{R}$ is the mapping from prior to the supremum of the sender's expected utility by taking an $n$-step sequential persuasion:
    \[
        v_n(p) \coloneqq \sup_{S^{(n)} \in \mathcal{S}^{(n)}_p} \mathcal{V}(S^{(n)}), \forall p \in \Delta(\Omega).
    \]
    In particular $v_0 = v$. 
    
    $v_\infty: \Delta(\Omega) \rightarrow \mathbb{R}$ is the mapping from prior to the supremum of the sender's expected utility by taking an infinite sequential persuasion:
    \[
        v_\infty(p) \coloneqq \sup_{S \in \mathcal{S}_p} \mathcal{V}(S), \forall p \in \Delta(\Omega).
    \]
\end{define}
The following lemma shows that the sequence $(v_n)_{n \in \mathbb{N}}$ can be obtained recursively. Note that we assume $F$ contains all trivial experiments, therefore $(v_n)_{n \in \mathbb{N}}$ is pointwise non-decreasing. In another word, we allow the sender to ``waste'' one step by taking a trivial experiment if they find an extra step to be useless for utility maximization.
\begin{lemma}\label{lem-iterate}
    The sequence $(v_n)_{n \in \mathbb{N}}$ satisfies 
    \begin{equation} \label{eq-iterate}
        v_{n}(p) = \sup_{(\lambda_j, p_j)_{j \in [m]} \in \mathcal{F}(p)} \sum_{j=1}^m \lambda_j v_{n-1}(p_j), \ \forall n \ge 1, \forall p \in \Delta(\Omega).
    \end{equation}
    In particular, $(v_n(p))_{0 \ge 1}$ is non-decreasing for every $p \in \Delta(\Omega)$.
\end{lemma}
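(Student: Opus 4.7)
The plan is to establish the recursive formula by proving two inequalities, and then derive monotonicity as an immediate corollary by plugging in the trivial experiment.

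First, for the direction $v_n(p) \ge \sup_{(\lambda_j,p_j)_{j\in[m]} \in \mathcal{F}(p)} \sum_j \lambda_j v_{n-1}(p_j)$, I would fix any $\varepsilon > 0$ and any $e = (\lambda_j, p_j)_{j\in[m]} \in \mathcal{F}(p)$. By definition of $v_{n-1}$, for each $j \in [m]$ there is an $(n-1)$-step sequential persuasion $S_j^{(n-1)} \in \mathcal{S}_{p_j}^{(n-1)}$ with $\mathcal{V}(S_j^{(n-1)}) \ge v_{n-1}(p_j) - \varepsilon$. Build an $n$-step sequential persuasion $S^{(n)} \in \mathcal{S}_p^{(n)}$ by setting $\phi_1((p)) = e$ so that $\Xi_1 = \{(p, e, p_j) : j \in [m]\}$, and then for $i \ge 2$, defining $\phi_i$ on the sub-tree below branch $j$ by copying the choices of $S_j^{(n-1)}$ (concretely, a history in $\Xi_i$ extending $(p,e,p_j)$ has the same tail of length $i-1$ as some history in the $\Xi_{i-1}$-component of $S_j^{(n-1)}$, and $\phi_i$ on it is defined to match). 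A straightforward chain-rule computation using Equations \eqref{eq-history-distrib}--\eqref{eq-post-distrib} then gives $\mathcal{V}(S^{(n)}) = \sum_j \lambda_j \mathcal{V}(S_j^{(n-1)}) \ge \sum_j \lambda_j v_{n-1}(p_j) - \varepsilon$. Since $\varepsilon > 0$ and $e$ were arbitrary, the desired inequality follows.

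For the reverse direction, I would take an arbitrary $S^{(n)} = (\Xi_0, \phi_1, \Xi_1, \ldots, \phi_n, \Xi_n) \in \mathcal{S}_p^{(n)}$ and decompose it along the first experiment. Writing $\phi_1((p)) = (\lambda_j, p_j)_{j \in [m]} \in \mathcal{F}(p)$, for each $j$ the subtree rooted at the history $(p, \phi_1((p)), p_j)$ yields an $(n-1)$-step sequential persuasion $S_j^{(n-1)} \in \mathcal{S}_{p_j}^{(n-1)}$ (formally, with $\Xi_i^{(j)}$ being the set of length-$i$ tails of histories in $\Xi_{i+1}$ extending $(p, \phi_1((p)), p_j)$, and $\phi_i^{(j)}$ the restriction of $\phi_{i+1}$ correspondingly re-indexed). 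The distribution of the terminal belief $b_n$ under $S^{(n)}$ factors as conditioning on $b_1 = p_j$ with probability $\lambda_j$ followed by the terminal-belief distribution of $S_j^{(n-1)}$, so $\mathcal{V}(S^{(n)}) = \sum_j \lambda_j \mathcal{V}(S_j^{(n-1)}) \le \sum_j \lambda_j v_{n-1}(p_j)$. Taking the supremum over $S^{(n)}$ (and noting that every $(\lambda_j,p_j)_{j\in[m]} \in \mathcal{F}(p)$ arises as some $\phi_1((p))$) completes the equality.

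Monotonicity $v_n(p) \ge v_{n-1}(p)$ then follows immediately by applying \eqref{eq-iterate} with the trivial experiment $(1, p) \in T \subseteq F$, which lies in $\mathcal{F}(p)$ and yields the single term $1 \cdot v_{n-1}(p)$ on the right-hand side.

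The main obstacle is purely notational: cleanly describing the bijection between histories of $S^{(n)}$ passing through $(p, \phi_1((p)), p_j)$ and histories of the sub-persuasion $S_j^{(n-1)}$, so that the probability factorization and the equality $\mathcal{V}(S^{(n)}) = \sum_j \lambda_j \mathcal{V}(S_j^{(n-1)})$ are unambiguous. Nothing genuinely subtle happens beyond bookkeeping because $v$ is bounded and all experiments have finite support, so every sum is finite and the $\varepsilon$-approximation for suprema causes no measurability issues.
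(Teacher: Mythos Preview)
Your proposal is correct and follows essentially the same approach as the paper: both directions are obtained by decomposing (respectively, composing) an $n$-step sequential persuasion along its first experiment, using $\varepsilon$-approximations to handle the suprema. The paper phrases the $\ge$ direction as a contradiction rather than a direct construction, and leaves the monotonicity claim implicit, but the underlying argument is identical to yours.
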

Moreover, since $v$ is upper bounded by $\overline V$, and for every finite sequential persuasion, the value of $\mathcal{V}$ is a weighted sum of $v$, every $v_n$ is also upper bounded by $\overline V$. By the monotone convergence theorem, $\lim_{n \to \infty} v_n(p)$ exists for every $p \in \Delta(n)$, and is in fact equal to $v_\infty (p)$:
\begin{lemma}\label{lem-converge}
    The sequence $(v_n)_{n \in \mathbb{N}}$ pointwisely converges to $v_\infty$, that is
    \begin{equation} \label{eq-converge}
        \lim_{n \to \infty} v_n(p) = v_\infty(p), \ \forall p \in \Delta(\Omega).
    \end{equation}
\end{lemma}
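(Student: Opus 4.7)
The plan is to prove both inequalities $\lim_{n \to \infty} v_n(p) \le v_\infty(p)$ and $\lim_{n \to \infty} v_n(p) \ge v_\infty(p)$ for each $p \in \Delta(\Omega)$. Existence of the limit on the left is already guaranteed by Lemma \ref{lem-iterate} (monotonicity) and the uniform bound $v_n \le \overline V$, so it suffices to establish these two bounds.

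For $\lim_{n} v_n(p) \le v_\infty(p)$, I would take any $S^{(n)} \in \mathcal{S}^{(n)}_p$ and extend it to an infinite sequential persuasion $S \in \mathcal{S}_p$ by appending, at every $n$-step history $\xi \in \Xi_n$, an infinite tail of trivial experiments $(1, \text{last}(\xi))$. By Definition \ref{define-terminate} this $S$ terminates after every $\xi \in \Xi_n$, and the resulting contribution to $\mathcal{V}(S)$ equals $\sum_{\xi \in \Xi_n} \Pr[\xi] \cdot v(\text{last}(\xi)) = \mathbb{E}[v(b_n)] = \mathcal{V}(S^{(n)})$. Hence $v_n(p) \le v_\infty(p)$ for every $n$, giving the inequality in the limit.

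For $\lim_{n} v_n(p) \ge v_\infty(p)$, I would fix any $S \in \mathcal{S}_p$ and any $\epsilon > 0$, and use the supremum in Definition \ref{define-func-v} to select $n$ with
\[
    \sum_{\xi \in \Xi_n} \mathbb{I}[S \text{ terminates after } \xi] \cdot \Pr[\xi] \cdot v(\text{last}(\xi)) > \mathcal{V}(S) - \epsilon.
\]
The truncation $S^{(n)} = (\Xi_0, \phi_1, \Xi_1, \ldots, \phi_n, \Xi_n)$ of $S$ lies in $\mathcal{S}^{(n)}_p$ and induces the same distribution over $\Xi_n$, so $\mathcal{V}(S^{(n)}) = \sum_{\xi \in \Xi_n} \Pr[\xi] \cdot v(\text{last}(\xi))$. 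Because $v \ge \underline V > 0$, this full sum dominates the partial sum restricted to terminated histories, yielding $v_n(p) \ge \mathcal{V}(S^{(n)}) > \mathcal{V}(S) - \epsilon$. Passing $n \to \infty$, then $\epsilon \to 0$, then taking the supremum over $S \in \mathcal{S}_p$ closes this direction.

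The only subtle point, and the one I would flag as the main obstacle, is that the two incarnations of $\mathcal{V}$ are not structurally identical: finite sequential persuasions are scored by $\mathbb{E}[v(b_n)]$ unconditional on termination, whereas infinite ones are scored only over terminated histories. The positivity assumption $\underline V > 0$ is exactly what lets the unconditional expectation over $\Xi_n$ in the truncation dominate the termination-restricted sum inside $\mathcal{V}(S)$; without it, one would have to account separately for the mass sitting on non-terminated histories, for instance by forcing the truncation to terminate at step $n$ and carefully bounding the utility lost on that residual mass.
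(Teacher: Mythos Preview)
Your proposal is correct and follows essentially the same two-step argument as the paper: extend a finite persuasion by trivial experiments to get $v_n(p)\le v_\infty(p)$, then truncate an infinite persuasion at a step where the terminated-history sum is within $\epsilon$ of $\mathcal V(S)$ to get the reverse inequality. The only difference is cosmetic $\epsilon$-management (the paper splits $\epsilon$ in half and chooses a near-optimal $S$ first, you fix $S$ and take the supremum last), and you make explicit the role of $v\ge\underline V>0$ in passing from the terminated-history sum to $\mathcal V(S^{(n)})=\mathbb{E}[v(b_n)]$, which the paper leaves implicit.
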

\begin{proof}
    Any $n$-step sequential persuasion $S^{(n)}$ obtains the same expected utility as an infinite sequential persuasion $S$ that simulates $S^{(n)}$ for the first $n$ steps, and chooses trivial experiments afterward. Therefore $v_\infty(p) \ge v_n(p), \forall p$.
    
    For any $p \in \Delta(\Omega)$ and any $\epsilon > 0$, there exists an infinite sequential persuasion $S = (\Xi_0, \phi_1, \Xi_1, \ldots) \in \mathcal{S}_p$ with $\mathcal{V}(S) \ge v_\infty(p) - \frac{\epsilon}{2}$. By definition of $\mathcal{V}$, there exists $n \in \mathbb{N}$ s.t. 
    \[
        \sum_{\xi \in \Xi_{n}} [\mathbb{I}[S \text{ terminates after } \xi] \cdot \Pr[\xi] \cdot v(\text{last}(\xi))] \ge \mathcal{V}(S) - \frac{\epsilon}{2} \ge v_\infty(p) - \epsilon,
    \]
    therefore the $n$-step sequential persuasion $S^{(n)} \coloneqq (\Xi_0, \phi_1, \Xi_1, \ldots, \phi_n, \Xi_n) \in \mathcal{S}^{(n)}_p$ given by the first $n$ steps of $S$ satisfies $\mathcal{V}(S^{(n)}) \ge v_\infty(p) - \epsilon$. Since $\epsilon$ is arbitrary, we have $\sup_{n \in \mathbb{N}} v_n(p) \ge v_\infty (p)$.
\end{proof}

Unfortunately, computing $(v_n)_{n \ge 1}$ by the recursion (\ref{eq-iterate}) and then taking limit according to (\ref{eq-converge}) can be rather demanding for general $F$. Therefore we hope to obtain another characterization of $v_\infty$ which does not rely on the limit of any infinite sequence (note that the definition of $\mathcal{V}$ for infinite sequential persuasions also involves a limit). The characterization is given in Theorem \ref{thm-limit-characterize}. We define partial order ``$\ge$'' over the set of functions $g: \Delta(\Omega) \rightarrow \mathbb{R}$ by pointwise dominance: $g_1 \ge g_2$ if $g_1(p) \ge g_2(p)$ for every $p \in \Delta(\Omega)$.

\begin{theorem}\label{thm-limit-characterize}
    $v_\infty(p) = \inf\{g(p) ; g \in G\}$ for any $p \in \Delta(\Omega)$,
    where the set $G$ is given by
    \[
        G \coloneqq \left\{g: \Delta(\Omega) \rightarrow \mathbb{R};\  g \ge v, g(\sigma(e)) \ge \sum_{j=1}^m \lambda_j g(p_j) \text{ for every } e = (\lambda_j, p_j)_{j \in [m]} \in F\right\}.
    \]
\end{theorem}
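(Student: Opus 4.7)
The plan is to establish the two inequalities $v_\infty(p) \le \inf_{g \in G} g(p)$ and $\inf_{g \in G} g(p) \le v_\infty(p)$ separately (the infimum is a well-defined real number since, e.g., the constant function $\overline V$ belongs to $G$). For the first inequality, I would fix $g \in G$ and prove by induction on $n$ that $v_n(p) \le g(p)$ for every $p \in \Delta(\Omega)$. The base case $n = 0$ is precisely the assumption $g \ge v$. For the inductive step, applying Lemma \ref{lem-iterate} together with the inductive hypothesis and the Bayes-plausibility property defining $G$ gives, for any $e = (\lambda_j, p_j)_{j \in [m]} \in \mathcal{F}(p) \subseteq F$,
\[
    \sum_{j=1}^m \lambda_j v_{n-1}(p_j) \le \sum_{j=1}^m \lambda_j g(p_j) \le g(\sigma(e)) = g(p),
\]
and taking the supremum over $\mathcal{F}(p)$ yields $v_n(p) \le g(p)$. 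Sending $n \to \infty$ and invoking Lemma \ref{lem-converge} gives $v_\infty(p) \le g(p)$; the infimum over $g \in G$ then completes this direction.

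For the reverse inequality, I would show that $v_\infty$ itself belongs to $G$, which immediately yields $\inf_{g \in G} g(p) \le v_\infty(p)$. The bound $v_\infty \ge v$ is immediate because the infinite sequential persuasion that chooses only trivial experiments terminates after the zero-step history and realizes utility $v(p)$. To verify the Bayes-plausibility inequality, fix $e = (\lambda_j, p_j)_{j \in [m]} \in F$ and $\epsilon > 0$; for each $j$, pick $S_j \in \mathcal{S}_{p_j}$ with $\mathcal{V}(S_j) \ge v_\infty(p_j) - \epsilon$, and glue these into an infinite sequential persuasion $S \in \mathcal{S}_{\sigma(e)}$ that takes $e$ as its first experiment and, conditional on outcome $j$, executes $S_j$ from $p_j$. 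The histories in $\Xi_n$ of $S$ for $n \ge 1$ are in bijection with $\bigsqcup_j \Xi_{n-1}$ of the $S_j$'s (via $\xi_j \mapsto (\sigma(e)) \oplus (e, p_j) \oplus (\text{tail of } \xi_j)$), with probabilities scaled by $\lambda_j$ and termination events matching, which gives $\mathcal{V}(S) = \sum_j \lambda_j \mathcal{V}(S_j) \ge \sum_j \lambda_j v_\infty(p_j) - \epsilon$. Letting $\epsilon \to 0$ yields $v_\infty(\sigma(e)) \ge \sum_j \lambda_j v_\infty(p_j)$, hence $v_\infty \in G$.

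The main technical point is justifying the identity $\mathcal{V}(S) = \sum_j \lambda_j \mathcal{V}(S_j)$ in the gluing step. Since $\mathcal{V}$ on infinite sequential persuasions is defined as a supremum over termination contributions at each step $n$, one must interchange this supremum with the finite sum over $j$. This interchange is valid because the partial termination sums are monotone nondecreasing in $n$ (a terminating history at step $n$ extends canonically to a trivial one-step continuation that also terminates and contributes the same amount) and because $v \ge \underline V > 0$ ensures all contributions are nonnegative. This bookkeeping is the only mildly delicate step; everything else reduces to a straightforward induction and a one-step gluing construction.
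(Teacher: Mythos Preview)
Your proof is correct and shares the paper's two-part skeleton (show $v_\infty \in G$, then show $v_\infty$ is the pointwise minimum of $G$), but the execution of each half differs in an interesting way. For the inequality $v_n \le g$, the paper argues by contradiction via a minimal counterexample $n^*$, which is just the well-ordering reformulation of your clean forward induction; the two are logically equivalent and your version is arguably more transparent. The real difference is in proving $v_\infty \in G$: the paper stays entirely at the level of the finite-step functions, using Lemma~\ref{lem-iterate} and Lemma~\ref{lem-converge} to derive a contradiction from $v_\infty(\sigma(e)) < \sum_j \lambda_j v_\infty(p_j)$, whereas you work directly with infinite sequential persuasions by gluing near-optimal $S_j \in \mathcal{S}_{p_j}$ behind the first experiment $e$. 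Your route is more constructive but forces you to confront the bookkeeping about $\mathcal{V}$ on infinite persuasions (the sup--finite-sum interchange), which you handle correctly via the monotonicity of the termination partial sums; the paper's route sidesteps this entirely by never leaving the $v_n$ sequence. Both arguments are valid, and neither dominates the other.
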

If we replace $F$ in the definition of set $G$ by $F_0$, then 
$G$ becomes the set of concave functions that dominates $v$, and the corresponding $v_\infty$ becomes the concave envelope of $v$. For a smaller set $F \subset F_0$, the corresponding $G$ still contains all concave functions dominating $v$, therefore $v_\infty$ is less than or equal to the concave envelope of $v$.
\begin{proof}[Proof of Theorem \ref{thm-limit-characterize}]
    We first show $v_\infty$ itself is in $G$. Obviously $v_\infty \ge v$. Suppose there exists $e = (\lambda_j, p_j)_{j \in [m]} \in F$ s.t. $v_\infty(\sigma(e)) < \sum_{j=1}^m \lambda_j v_\infty(p_j)$. Denote $\epsilon \coloneqq \sum_{j=1}^m \lambda_j v_\infty(p_j) - v_\infty(\sigma(e))$, since $m$ is finite, and $(v_n(p_j))_{n \in \mathbb{N}}$ converges to $v_\infty(p_j)$ for any $j \in [m]$, there exists $n_0>0$ s.t. for any $n > n_0$ and any $j \in [m]$, $v_n(p_j) > v_\infty(p_j) - \epsilon$. Therefore, we have
    \[
        v_\infty(\sigma(e)) = \sum_{j=1}^m \lambda_j v_\infty(p_j) - \epsilon < \sum_{j=1}^m \lambda_j (v_n(p_j) + \epsilon) - \epsilon = \sum_{j=1}^m \lambda_j v_n(p_j) \le v_{n+1}(\sigma(e)),
    \]
    where the last inequality comes from Lemma \ref{lem-iterate}. This contradicts to Lemma \ref{lem-converge}, which tells that the increasing sequence $(v_n(\sigma(e)))_{n \in \mathbb{N}}$ converges to $v_\infty(\sigma(e))$.
    
    It suffices to show that $v_\infty$ is the minimum element of $G$ under partial order ``$\ge$''. Suppose $v_\infty$ is not the minimum, then there exists $g \in G$ s.t. $g(p) < v_\infty(p)$ for some $p \in \Delta(\Omega)$. By Lemma \ref{lem-converge}, for some $n \in \mathbb{N}$ we have $v_n(p) > g(p)$. Consider the minimum $n$ that makes this happen for some $(g, p)$:
    \[
        n^* \coloneqq \min\{n \in \mathbb{N} ; \text{there exists }g \in G, p \in \Delta(\Omega) \text{ such that } v_n(p) > g(p) \}.
    \]
    The set of which we take minimum is nonempty, discrete, and lower bounded by 1, therefore $n^* \ge 1$ exists. Consider some instances $g^* \in G$ and $p^* \in \Delta(\Omega)$ such that $v_{n^*}(p^*) > g^*(p^*)$. By Lemma \ref{lem-iterate}, there exists $(\lambda_j, p_j)_{j \in [m]} \in \mathcal{F}(p^*)$ s.t. 
    \begin{equation}\label{eq-contradict-instance}
        g^*(p^*) < \sum_{j=1}^m \lambda_j v_{n^*-1}(p_j).
    \end{equation}
    By definition of $n^*$ we have
    \begin{equation} \label{eq-last-valid-n}
        g \ge v_{n^*-1}, \forall g \in G.
    \end{equation}
    However, by combining equation (\ref{eq-last-valid-n}) with the definition of $G$, we can also obtain 
    \[
        g^*(p^*) \ge \sum_{j=1}^m \lambda_j g^*(p_j) \ge \sum_{j=1}^m \lambda_j v_{n^*-1}(p_j),
    \]
    which contradicts to equation (\ref{eq-contradict-instance}). We conclude that $v_\infty$ is indeed the minimum element of $G$, and this finishes the proof of the theorem.
\end{proof}

The representation of $v_\infty$ in Theorem \ref{thm-limit-characterize} leads to a useful observation, that any $g \in G$ can be used as a certificate that limits $v_\infty$ from above. The rest of the section is some quick applications of this property.

\subsection{When finite steps are sufficient}
In reality, it is natural that the number of usable experiment steps is limited, or that each experiment has a cost so that the sender has to handle the trade-off between extra costs and the increase in utility by adding more steps. Here we focus on a degenerate case, that is, when it suffices to take only a finite number of steps. By Theorem \ref{thm-limit-characterize} and Lemma \ref{lem-iterate}, such case is marked by some corresponding $g \in G$:
\begin{coro}
    For any $n \in \mathbb{N}$ and $p \in \Delta(\Omega)$, $v_n(p) = v_\infty(p)$ if and only if there exists $g: \Delta(\Omega) \rightarrow \mathbb{R}$ such that $g \ge v$, $g(\sigma(e)) \ge \sum_{j=1}^m \lambda_j g(p_j)$ for every $e = (\lambda_j, p_j)_{j \in [m]} \in F$, and $g(p) = v_n(p)$.
\end{coro}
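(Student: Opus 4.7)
The corollary is a direct application of Theorem \ref{thm-limit-characterize}, so the plan is short: handle the two directions separately, using $v_\infty$ itself as a witness for the ``only if'' direction and using the minimality of $v_\infty$ in $G$ for the ``if'' direction.

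For the \emph{only if} direction, suppose $v_n(p) = v_\infty(p)$. The plan is to simply take $g = v_\infty$. Theorem \ref{thm-limit-characterize} already established that $v_\infty \in G$, i.e.\ $v_\infty \ge v$ and $v_\infty(\sigma(e)) \ge \sum_{j=1}^m \lambda_j v_\infty(p_j)$ for every $e = (\lambda_j, p_j)_{j \in [m]} \in F$. The equality $g(p) = v_n(p)$ is then exactly the hypothesis $v_\infty(p) = v_n(p)$. No further work is needed.

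For the \emph{if} direction, suppose such a $g$ exists. Then by definition $g \in G$, so Theorem \ref{thm-limit-characterize} gives $v_\infty(p) \le g(p) = v_n(p)$. On the other hand, Lemma \ref{lem-iterate} tells us that $(v_n(p))_{n \in \mathbb{N}}$ is non-decreasing, and Lemma \ref{lem-converge} tells us its limit is $v_\infty(p)$, so in particular $v_n(p) \le v_\infty(p)$. Combining the two inequalities yields $v_n(p) = v_\infty(p)$.

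There is essentially no obstacle here; the content of the corollary is entirely packaged inside Theorem \ref{thm-limit-characterize} and Lemmas \ref{lem-iterate}--\ref{lem-converge}. The only small point worth noting in the write-up is that in the ``only if'' direction one needs $v_\infty$ to be a bona fide member of $G$ (which was verified in the first paragraph of the proof of Theorem \ref{thm-limit-characterize}) rather than merely the pointwise infimum of $G$; this guarantees that $v_\infty$ can serve as the witness $g$.
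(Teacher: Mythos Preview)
Your proposal is correct and matches the paper's intended argument: the paper does not give a separate proof but simply remarks that the corollary follows from Theorem~\ref{thm-limit-characterize} and Lemma~\ref{lem-iterate}, which is exactly the route you take (with Lemma~\ref{lem-converge} supplying the inequality $v_n(p)\le v_\infty(p)$). Your observation that the ``only if'' direction relies on $v_\infty$ actually belonging to $G$, not merely being its infimum, is the one subtlety worth flagging, and you handle it correctly.
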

In general, there can be infinitely many $g \in G$ satisfying the corollary's requirement, but the minimum element $v_\infty$ is unique. Therefore, finding one such $g$ can be much easier than finding $v_\infty$.

\subsection{Parameterized estimation}
Every $g \in G$ bounds $v_\infty$ from the above, and by choosing a sufficiently representative $G' \subseteq G$, we can possibly obtain nice estimations of $v_\infty$. One idea is to take $G'$ to be a parameterized function family, so that (for $F$ with some nice structure) the condition $g(\sigma(e)) \ge \sum_{j=1}^m \lambda_j g(p_j)$ can be quickly verified throughout $F$.  

For example, consider $\Omega = \{(\theta_0, \theta_1) \in \{0,1\}^2\}$,
and every feasible experiment is a test for one of $\theta_0$ and $\theta_1$ that is independent of the other variable, that is, every $e \in F$ only changes the marginal distribution of $\theta_d$ for some $d \in \{0,1\}$, while the marginal distribution of $\theta_{1-d}$ for every belief after $e$ are unchanged.
In this case, $\Delta(\Omega)$ can be represented by the unit square whose $x$ and $y$ axis correspond to the probabilities of $\theta_0 = 1$ and $\theta_1 = 1$ separately, and every experiment in $F$ spreads the belief either horizontally or vertically. For any such $F$, the corresponding $G$ contains the set $G'$ of bilinear forms that pointwisely dominate $v$, which belongs to a function family of four parameters:
\begin{align*}
    G' \coloneqq \big \{& g: \Delta(\Omega) \rightarrow \mathbb{R}; \ \  g \ge v, c_1, c_2, c_3, c_4 \in \mathbb{R}, \\
    & g((p^{(0)}_j, p^{(1)}_j)) = c_1  p^{(0)}_j p^{(1)}_j + c_2 p^{(0)}_j + c_3 p^{(1)}_j + c_4 \text{ for every } (p^{(0)}_j, p^{(1)}_j) \in [0,1]^2 \big \}.
\end{align*}

An immediate question is whether such parameterized families can be used to obtain the exact $v_\infty$, like in the classic Bayesian persuasion where the concave closure can be obtained by considering only linear functions (see e.g. \citet{duality}). Unfortunately, there exists $(\mu, F, v)$ for this example where the optimal expected utility depends on the value of $v$ at arbitrarily many points in $[0,1]^2$, which rules out the possibility of finding $v_\infty$ with a finite parameter function family.

\section{Implementation of optimal Bayesian persuasion}
Before analyzing the general existence and properties of optimal sequential persuasions, we first study a special case, that is, when the optimal expected utility by the classic Bayesian persuasion with no limit on feasible experiments, which by \citet{bayesian-persuasion} is equal to the concave closure of $v$, can be achieved by some sequential persuasion using experiments in $F$. In this section, we introduce extra assumptions about the feasible set $F$, so that accessible criteria for this special case can be established. 

Since we assume $v$ to be upper semicontinuous and $\Omega$ to be finite, the optimal utility of Bayesian persuasion is always attainable by some distribution over $\Delta(\Omega)$ whose support size is at most $|\Omega|$, which is in $F_0$. Therefore, it suffices to use the maximum over $F_0$ to define the concave closure $\hat v: \Delta(\Omega) \rightarrow \mathbb{R}$, which is the mapping from the prior to the optimal sender's utility of classic Bayesian persuasion:
\[
    \hat v(p) \coloneqq \max \left\{\sum_{j = 1}^m \lambda_j v(p_j); e= (\lambda_j, p_j)_{j \in [m]} \in F_0, \sigma(e) = p \right\}, \forall p \in \Delta(\Omega).
\]
By the discussion after Theorem \ref{thm-limit-characterize}, we always have $\hat v \ge v_\infty$. 

A nice property of the optimal experiments in $F_0$ is that each of them can be fully identified by its support, and the same optimal support is shared by all priors within its convex hull. Denote by $\text{conv}(O)$ the convex hull of a set $O \subseteq \Delta(\Omega)$, we rigorously state this property as the following lemma.
\begin{lemma}\label{lem-support-indentify}
    For any $(v, \mu)$, there exists some $O \subseteq \Delta(\Omega)$ with $\mu \in \text{conv}(O)$ such that, for any $p \in \text{conv}(O)$ and any $e = (\lambda_j, p_j)_{j \in [m]} \in F_0$ with $\sigma(e) = p$, $\sum_{j=1}^m \lambda_j v(p_j) = \hat v(p)$ if and only if $\tau(e) \subseteq O$.
\end{lemma}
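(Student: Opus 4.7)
The plan is to realize $O$ as the contact set on a ``face'' of the concave closure $\hat v$ passing through $\mu$. Since $v$ is upper semicontinuous on the compact simplex $\Delta(\Omega)$, $\hat v$ is concave and upper semicontinuous, and $\hat v(p)$ is attained by some element of $F_0$ at every $p$ (by Carathéodory applied to the hypograph of $\hat v$). I would first invoke the existence of a supergradient of $\hat v$ at $\mu$, yielding an affine function $\ell : \Delta(\Omega) \to \mathbb{R}$ with $\ell \ge \hat v$ on $\Delta(\Omega)$ and $\ell(\mu) = \hat v(\mu)$. Then set
\[
    C \coloneqq \{q \in \Delta(\Omega);\ \hat v(q) = \ell(q)\}, \qquad O \coloneqq \{q \in C;\ v(q) = \hat v(q)\},
\]
so that $C$ is convex (as the zero-set of the concave, non-positive function $\hat v - \ell$), contains $\mu$, and $\hat v|_C = \ell|_C$ is affine.

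The key step is to show $\text{conv}(O) = C$, which in particular places $\mu$ in $\text{conv}(O)$. Fix any $q \in C$ and let $e_q = (\lambda_k, p_k)_k \in F_0$ attain $\hat v(q) = \sum_k \lambda_k v(p_k)$. Chaining $v \le \hat v \le \ell$ pointwise gives
\[
    \ell(q) = \hat v(q) = \sum_k \lambda_k v(p_k) \le \sum_k \lambda_k \hat v(p_k) \le \sum_k \lambda_k \ell(p_k) = \ell(q),
\]
which is forced to be an equality at every step, yielding $v(p_k) = \hat v(p_k) = \ell(p_k)$ for every $k$. Hence each $p_k \in O$ and $q = \sum_k \lambda_k p_k \in \text{conv}(O)$; the reverse inclusion $\text{conv}(O) \subseteq C$ is immediate from convexity of $C$.

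The iff then falls out by applying exactly the same chain to an arbitrary $e = (\lambda_j, p_j)_{j \in [m]} \in F_0$ with $\sigma(e) = p \in C$. If $\tau(e) \subseteq O$, then $\sum_j \lambda_j v(p_j) = \sum_j \lambda_j \ell(p_j) = \ell(p) = \hat v(p)$, so $e$ is optimal. Conversely, if $\sum_j \lambda_j v(p_j) = \hat v(p)$, the chain collapses to equality and each $p_j$ satisfies $v(p_j) = \ell(p_j) = \hat v(p_j)$; since $F_0$ requires every $\lambda_j > 0$, this forces $\tau(e) \subseteq O$.

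The most delicate piece will be the existence of the affine supporting function $\ell$. This is automatic when $\mu$ lies in the relative interior of $\Delta(\Omega)$, but may fail at boundary points where $\hat v$ can have an unbounded slope in directions leaving $\Delta(\Omega)$. The resolution is to note that any mean-preserving spread of $\mu$ is supported on the smallest face of $\Delta(\Omega)$ containing $\mu$, so the whole argument restricts cleanly to that face; $\mu$ is relative interior there, and a supergradient exists. The extreme case where $\mu$ is a vertex is trivial: only the trivial experiment spreads $\mu$, so $O = \{\mu\}$ trivially satisfies the statement.
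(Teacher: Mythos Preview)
Your approach is essentially the same as the paper's: both pick a supporting affine function $\ell$ (the paper calls it $f_\mu$) at $\mu$ with $\ell \ge \hat v \ge v$ and $\ell(\mu)=\hat v(\mu)$, set $O=\{q:\ell(q)=v(q)\}$, and read off the equivalence from the chain $v\le\hat v\le\ell$ collapsing to equalities. Your write-up is in fact more complete than the paper's sketch: you prove the ``only if'' direction explicitly (the paper's proof as written only argues ``if''), you establish $\mathrm{conv}(O)=C$ rather than merely $\mu\in\mathrm{conv}(O)$, and you correctly flag and resolve the boundary issue for the existence of $\ell$ by restricting to the minimal face of $\Delta(\Omega)$ containing $\mu$, a point the paper silently assumes.
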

\begin{proof}
    For every $v$, $\hat v$ is given by the concave closure of $v$, therefore for every $\mu \in \Delta(\Omega)$ there exists an affine function $f_\mu$ defined on $\Delta(\Omega)$ so that $f_\mu(\mu) = \hat v(\mu)$, $f_\mu(p) \ge \hat v(p) \ge v(p)$ for every $p \in \Delta(\Omega)$, and $f_\mu(p') = \hat v(p') = v(p')$ for every optimal information structure $e_0 \in F_0$ and every $p' \in \tau(e)$. Now consider the set
    \[
        O \coloneqq \{p \in \Delta(\Omega); f_\mu(p) = v(p)\}.
    \]
    By the affine property of $f_\mu$ and concavity of $\hat v$, for any $p \in \text{conv}(O)$ we have $f_\mu(p) = \hat v(p)$. Moreover, for any $e = (\lambda_j, p_j)_{j \in [m]}$ with $\tau(e) \in O$ and $\sigma(e) \in \text{conv}(O)$, the affine property of $f_\mu$ gives
    \[
        \sum_{j=1}^m \lambda_j v(p_j) 
        = \sum_{j=1}^m \lambda_j f_\mu(p_j) 
        = f_\mu(\sigma(e)) 
        = \hat v(p).
    \]
    Therefore $e$ is an optimal experiment that maximizes the expected utility over all mean-preserving spreads of $p$.
\end{proof}

By Lemma \ref{lem-support-indentify}, the sender achieves the optimal expected utility of the classic Bayesian persuasion if and only if the receiver always obtains some belief in $O$, which is irrelevant to the exact distribution of belief over $O$ (e.g. when $|O| > |\Omega|$, there can be infinite many $e \in F_0$ with support in $O$ and $\sigma(e) = \mu$). When we take sequential persuasions into consideration, the following criterion is useful for characterizing the existence of an infinite sequential persuasion that attains the optimal expected utility of the classic Bayesian persuasion. 
\begin{define}\label{define-implement}
    For any instance $(\mu, F)$, set $O \subseteq \Delta(\Omega)$ is implementable if for any $\epsilon > 0$, there exists $n \in \mathbb{N}$ and $S^{(n)} \in \mathcal{S}^{(n)}_\mu$ such that the random belief $b_n$ induced by $S^{(n)}$ satisfies $Pr[b_n \in O] > 1-\epsilon$.
\end{define}
Note that in this definition, only the probability that the finite-step belief comes exactly into $O$ is considered, and whether the belief is in some (arbitrarily small) neighborhood of $O$ does not matter.
An illustration of this is the example in Section \ref{sec-terminate-utility}. The same example tells that whether $v_\infty(\mu) = \hat v(\mu)$, which is a weaker property than the existence of $S \in \mathcal{S}_\mu$ with $\mathcal{V}(S) = \hat v(\mu)$, cannot be characterized by the probability that the finite-step belief comes into $O$. 



Function $v_\infty$ can be used to derive a useful characterization of an arbitrary set $O$'s implementability, which is stated as the following lemma.
\begin{lemma}\label{lem-implement}
    For any $(\mu, F)$, set $O\subseteq \Delta(\Omega)$ is implementable if and only if the instance $(\mu, F, v)$ with
    \[
        v(p) \coloneqq \left \{
        \begin{aligned}
        1 \quad& p \in O \\
        0 \quad & p \in \Delta(\Omega) \backslash O
        \end{aligned}\right.
    \]
    gives $v_\infty(\mu) = 1$.
\end{lemma}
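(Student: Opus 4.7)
The plan is to use the indicator $v$ to convert probabilities of belief landing in $O$ directly into expected utilities, and then invoke Lemma \ref{lem-converge} to pass between finite and infinite sequential persuasions. The starting observation is that for the indicator $v$, any finite sequential persuasion $S^{(n)}$ satisfies
\[
    \mathcal{V}(S^{(n)}) = \mathbb{E}[v(b_n)] = \Pr[b_n \in O],
\]
so "implementability of $O$" translates word for word into "$\sup_n v_n(\mu) = 1$".

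For the forward direction, assume $O$ is implementable. For every $\epsilon>0$ pick $n$ and $S^{(n)}$ with $\Pr[b_n \in O] > 1-\epsilon$; then $v_n(\mu) \ge \mathcal{V}(S^{(n)}) > 1-\epsilon$, and since $(v_n)$ is non-decreasing by Lemma \ref{lem-iterate} and converges to $v_\infty$ by Lemma \ref{lem-converge}, $v_\infty(\mu) \ge 1-\epsilon$ for every $\epsilon>0$, so $v_\infty(\mu) \ge 1$. For the matching upper bound I would observe that the constant function $g \equiv 1$ lies in the set $G$ of Theorem \ref{thm-limit-characterize}: it dominates $v$ pointwise, and for any $e=(\lambda_j,p_j)_{j\in[m]} \in F$ we have $g(\sigma(e))=1=\sum_j \lambda_j g(p_j)$. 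Hence $v_\infty(\mu) \le g(\mu) = 1$, giving $v_\infty(\mu) = 1$.

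For the backward direction, assume $v_\infty(\mu)=1$. By Lemma \ref{lem-converge}, $v_n(\mu) \to 1$, so given $\epsilon>0$ there is some $n$ with $v_n(\mu) > 1 - \epsilon/2$. Since $v_n(\mu)$ is a supremum over $\mathcal{S}^{(n)}_\mu$, pick $S^{(n)} \in \mathcal{S}^{(n)}_\mu$ with $\mathcal{V}(S^{(n)}) > v_n(\mu) - \epsilon/2 > 1-\epsilon$. Unwinding the indicator, $\Pr[b_n \in O] = \mathcal{V}(S^{(n)}) > 1-\epsilon$, which is exactly Definition \ref{define-implement}.

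There is no serious obstacle here; the lemma is essentially the statement that the two convergences — monotone approach of $v_n$ to $v_\infty$ on the one hand, and approach of $\Pr[b_n \in O]$ to $1$ on the other — are the same convergence, modulo the correct choice of payoff $v$. The only mild subtlety worth flagging is that one must not try to extract $O$-implementability from a single \emph{infinite} sequential persuasion $S$ (which could involve non-termination issues of the kind illustrated in Section \ref{sec-terminate-utility}); the proof goes through $(v_n)$ and finite-step persuasions, where $\mathcal{V}(S^{(n)}) = \Pr[b_n \in O]$ holds unconditionally.
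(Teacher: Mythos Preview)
Your proof is correct and follows essentially the same route as the paper: both directions go through the identity $\mathcal{V}(S^{(n)})=\Pr[b_n\in O]$ and the convergence $v_n\to v_\infty$ from Lemma~\ref{lem-converge}. The one cosmetic difference is that for the bound $v_\infty(\mu)\le 1$ you invoke the constant function $g\equiv 1\in G$ via Theorem~\ref{thm-limit-characterize}, whereas the paper leaves this implicit (it follows immediately from $v\le 1$, hence $v_n\le 1$ for all $n$); either way works.
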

\begin{proof}
    \textbf{If part:} Suppose $v_\infty(\mu) = 1$, then for any $\epsilon>0$, there exists $n \ge 0$ s.t. $v_n(\mu) > 1 - \frac{\epsilon}{2}$, which further means there exists $S^{(n)} \in \mathcal{S}^{(n)}_\mu$ such that 
    \[
        \mathcal{V}(S^{(n)}) \ge v_n(\mu) - \frac{\epsilon}{2} = 1-\epsilon,
    \]
    therefore with probability $1-\epsilon$ the belief $b_n$ induced by $S^{(n)}$ is in $O$. Since $\epsilon>0$ is arbitrary, $O$ is implementable.

    \noindent
    \textbf{Only if part:} Suppose $O$ is implementable, then for any $\epsilon > 0$, there exists some finite-step sequential persuasion which starts from $\mu$, and ends with belief in $O$ with probability at least $1-\epsilon$. Therefore $v_\infty(\mu) \ge 1-\epsilon$ for any $\epsilon > 0$, whence $v_\infty(\mu) = 1$. 
\end{proof}

\subsection{Additional assumptions}
So far we have discussed the utility structure of sequential persuasions. In order to obtain more structural results, we need to make additional assumptions. The first assumption follows the intuition that an experiment itself is feasible if it can be approximated arbitrarily well using feasible experiments.
\begin{assume}\label{assume-closed-F}
    There exists $h \in \mathbb{N}$ s.t. $|\tau(e)| \le h, \forall e \in F$. Moreover, for any sequence $(e_i)_{i \ge 1}$ in $F$ that weakly converges to $e \in \Delta(\Delta(\Omega))$, $e$ is also in $F$.
\end{assume}
It is easy to construct $F$ that does not contain the weak limit of experiment sequence $(e_i)_{i \ge 1}$ with $e_i\in F$. In this case, for some corresponding $v$, the values of $(v_n)_{n \in \mathbb{N}}$ and $v_\infty$ cannot be exactly attained by any corresponding finite-step and infinite sequential persuasions. The upper bound $h$ on support size guarantees the validity of the claim that $F$ contains all weak limits: If the elements in $F$ have unbounded support size, although by Prokhorov's theorem the weak limit $e \in \Delta(\Delta(\Omega))$ always exists, it is not necessarily in $F_0$, which may break $F \subseteq F_0$. For example, consider the sequence of uniform distributions on the sequence of sets $(\{\frac{j}{i}; 1 \le j \le i\})_{i \ge 1}$, every distribution in the sequence has a finite support. However, its weak limit is the uniform distribution over $[0, 1]$ interval. With the upper bound $h$, it can be easily shown that the weak limit $e$ also satisfies $|\tau(e)| \le h$.

Assumption \ref{assume-closed-F} is not sufficient for the existence of infinite sequential persuasion that attains the value of $v_\infty$. The following assumption is also considered. 
\begin{assume}\label{assume-entropy}
    There exists $\delta > 0$ s.t. for any $e = (\lambda_j, p_j)_{j \in [m]} \in F \backslash T$, $\sum_{j \in [m]} \lambda_j H(p_j) \le H(\sigma(e)) - \delta$.
\end{assume}
Here $H: \Delta(\Omega) \rightarrow \mathbb{R}_{\ge 0}$ is the entropy function: $H(p) \coloneqq - \sum_{\omega \in \Omega} p(\omega) \log p(\omega)$. 
A motivation behind Assumption \ref{assume-entropy} is as follows: Suppose that, to correctly interpret of a nontrivial experiment $e \in F \setminus T$ and calculate the corresponding Bayesian update, the receiver has to incur some cost $c > 0$. The receiver's utility function $u: \Delta(\Omega) \rightarrow \mathbb{R}$ is the pointwise maximum of the bounded linear functions $f_a(p) \coloneqq \mathbb{E}_{\omega \sim p} U(\omega, a)$ for all $a \in A$, therefore $u$ is convex and bounded. If $e = (\lambda_j, p_j)_{j \in [m]}$ reduces the expected entropy of the belief by only a very small amount, then with high probability, the receiver's new belief is very similar to the belief before $e$. In this case, by interpreting $e$ and updating its belief, the receiver's increase in utility $[\sum_{j \in [m]}\lambda_j u(p_j) - u(\sigma(e))]$ does not exceed the incurred cost $c$. To incentivize the receiver to update its belief as expected, the sender has to ensure that every nontrivial experiment reduces the expected entropy of belief by some non-negligible amount.

Although being easy to interpret, Assumption \ref{assume-entropy} is too strong for our purpose. We find the following assumption to be a necessary condition for Assumption \ref{assume-entropy}, and we show in Theorem \ref{thm-opt-exist} that under Assumption \ref{assume-closed-F}, it is a necessary and sufficient condition for $v_\infty(\mu)$ to be attainable by some $S \in \mathcal{S}_\mu$.
\begin{assume}\label{assume-uniform-as}
    For prior $\mu$, there exists a sequence of finite-step sequential persuasions $(S_k^{(n_k)})_{k \ge 1}$ starting from $\mu$ such that:
    \begin{itemize}
        \item $\lim_{k \to \infty} \mathcal{V}(S_k^{(n_k)}) = v_\infty(\mu)$, 
        \item for every $\epsilon > 0$, there exists $n_\epsilon \in \mathbb{N}$ such that for every $k$ with $n_k \ge n_\epsilon$,
        \[
            \Pr[S_k^{(n_k)} \text{ terminates after } n_\epsilon \text{ steps}] \ge 1-\epsilon.
        \]
    \end{itemize}
\end{assume}
Note that there always exists $(S_k^{(n_k)})_{k \ge 1}$ whose limiting utility is $v_\infty(\mu)$, and every finite-step sequential persuasion definitely terminates after sufficiently many steps. Assumption \ref{assume-uniform-as} requires that such a sequence shares a minimum rate of termination, that is, the same pair $(\epsilon, n_\epsilon)$ can be used to lower-bound the termination probability of every element in the sequence.

In fact, Assumption \ref{assume-entropy} implies Assumption \ref{assume-uniform-as} in a ``rough'' way, which is to directly limit the connectivity over $\Delta(\Omega)$ that $F$ creates, so that any sequential persuasion has to terminate with high probability after some fixed steps regardless of $\mu$ and the exact choice of experiments: For every prior $\mu \in \Delta(\Omega)$ we have $H(\mu) \le \log(\Omega)$. Assumption \ref{assume-entropy} requires that taking each nontrivial experiment reduces entropy by at least $\delta$, therefore the probability of taking more than $n$ nontrivial experiments is upper bounded by $\log(|\Omega|) / (n\delta)$. By taking a sequence of finite-step sequential persuasions whose expected utility goes to $v_\infty(\mu)$ and neglecting all nontrivial experiment taken before termination, we can obtain a sequence in which every term terminates after $n$ steps with probability at least $1-\log(|\Omega|) / (n\delta)$.

\subsection{Implementability and structure of implementations}
In previous sections, we define the set of feasible experiments $F$ to be an arbitrary subset of $F_0$. Such an arbitrary $F \subseteq F_0$ can be very irregular, so that although $v$ is upper semicontinuous, $(v_n)_{n \ge 1}$ can easily lose upper semicontinuity unless extra assumptions are introduced. The case with the limiting function $v_{\infty}$ can be even more complex, as we show in Section \ref{sec-triangle}. We use the following lemma to guarantee the upper semicontinuity of these functions.
\begin{lemma}\label{lem-upper-semicont}
    Under Assumption \ref{assume-closed-F}, for every $n \ge 0$, $v_n$ is upper semicontinuous. Under Assumption \ref{assume-closed-F} and \ref{assume-entropy}, $v_\infty$ is upper semicontinuous.  
\end{lemma}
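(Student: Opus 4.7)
\textbf{Proof proposal for Lemma~\ref{lem-upper-semicont}.}

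For the first claim I plan an induction on $n$. The base $v_0 = v$ is upper semicontinuous by standing hypothesis. For the step, I assume $v_{n-1}$ is upper semicontinuous and pick any convergent sequence $p^{(k)} \to p$ in $\Delta(\Omega)$. By Lemma~\ref{lem-iterate}, for each $k$ I choose $e^{(k)} = (\lambda^{(k)}_j, p^{(k)}_j)_{j \in [m_k]} \in \mathcal{F}(p^{(k)})$ with $\sum_j \lambda^{(k)}_j v_{n-1}(p^{(k)}_j) \ge v_n(p^{(k)}) - 1/k$. Assumption~\ref{assume-closed-F} gives $m_k \le h$, so I pass to a subsequence along which $m_k$ is a constant $m$ and, by compactness of $[0,1]^m \times \Delta(\Omega)^m$, along which $(\lambda^{(k)}_j, p^{(k)}_j) \to (\lambda^*_j, p^*_j)$ for each $j$. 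The measure $e^* \coloneqq \sum_j \lambda^*_j \delta_{p^*_j}$ is the weak limit of $e^{(k)}$ in $\Delta(\Delta(\Omega))$, and after merging coincident atoms it lies in $F$ by the closedness part of Assumption~\ref{assume-closed-F}; continuity of $\sigma$ under weak convergence yields $\sigma(e^*) = p$, so $e^* \in \mathcal{F}(p)$. Boundedness of $v_{n-1}$ together with its upper semicontinuity at each $p^*_j$ gives $\limsup_k \lambda^{(k)}_j v_{n-1}(p^{(k)}_j) \le \lambda^*_j v_{n-1}(p^*_j)$ for each $j$; summing over the finitely many $j$ and invoking $e^* \in \mathcal{F}(p)$ produces $\limsup_k v_n(p^{(k)}) \le \sum_j \lambda^*_j v_{n-1}(p^*_j) \le v_n(p)$, as required.

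For the second claim my strategy is to upgrade the pointwise convergence $v_n \to v_\infty$ from Lemma~\ref{lem-converge} to \emph{uniform} convergence; since each $v_n$ is then upper semicontinuous by the first claim and a uniform limit of upper semicontinuous functions is upper semicontinuous, this suffices. The tool is an entropy-budget argument: fix $S \in \mathcal{S}_p$ and let $b_i$ and $N_i$ denote the random belief and the number of nontrivial experiments taken after step $i$. Using Assumption~\ref{assume-entropy} in the nontrivial case and triviality in the trivial case, $Z_i \coloneqq H(b_i) + \delta N_i$ is a supermartingale, so $\delta\, \mathbb{E}[N_\infty] \le H(p) \le \log|\Omega|$, and Markov's inequality gives $\Pr[N_\infty > N] \le \log|\Omega|/(N\delta)$, with a bound independent of both $p$ and $S$.

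Given this tail bound, for any $\epsilon > 0$ and any $p$ I pick $S \in \mathcal{S}_p$ with $\mathcal{V}(S) \ge v_\infty(p) - \epsilon$ and form the truncation $S^N$ that switches to the trivial experiment at the moment the $(N{+}1)$-th nontrivial would be taken. Since this only perturbs paths in the event $\{N_\infty > N\}$, and $v \in [0,\overline V]$, one gets $\mathcal{V}(S^N) \ge \mathcal{V}(S) - \overline V \log|\Omega|/(N\delta)$. Finally I compress $S^N$ to a genuine $N$-step persuasion $\tilde S^{(N)}$ whose $i$-th experiment on every branch is the $i$-th nontrivial experiment of $S^N$ on that branch, padded with trivials if $S^N$ uses fewer than $N$ nontrivials. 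Because trivial experiments leave beliefs unchanged, the terminal-belief distribution of $\tilde S^{(N)}$ coincides with that of $S^N$ on terminating paths and merely adds extra mass at the last-nontrivial belief on any non-terminating path; using $v \ge 0$ this gives $\mathcal{V}(\tilde S^{(N)}) \ge \mathcal{V}(S^N)$. Combining the inequalities and letting $\epsilon \to 0$ yields $v_\infty(p) - v_N(p) \le \overline V \log|\Omega|/(N\delta)$, uniformly in $p$.

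The main obstacle I expect is the clean formalization of the compression: defining $\tilde S^{(N)}$ as a sequence $(\Xi_0, \phi_1, \Xi_1, \ldots, \phi_N, \Xi_N)$ requires ``skipping through'' the trivial experiments of $S^N$ on every branch to locate the next nontrivial one, and verifying that the result really satisfies the sequential-persuasion definition with values in $F$. A secondary technicality in the induction is that the limiting atoms $p^*_j$ can collide, so $e^*$ belongs to $F_0$ only after merging; this is harmless because the merged measure is still the weak limit of $(e^{(k)})_k$ and hence in $F$ by Assumption~\ref{assume-closed-F}, but it needs to be stated explicitly.
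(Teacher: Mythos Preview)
Your proposal is correct and follows essentially the same route as the paper: an induction on $n$ using closedness of $F$ to extract a limiting feasible experiment for the first claim, and an entropy-budget (Markov/supermartingale) bound yielding uniform convergence of $(v_n)$ to $v_\infty$ for the second. Your treatment is in fact slightly more careful than the paper's in two places---you explicitly merge colliding atoms so the limit lies in $F_0$, and you spell out the ``compression'' of trivials that the paper invokes only implicitly via its ``without loss of generality only trivial experiments after termination'' clause---so the obstacles you flag are exactly the right ones and are surmountable as you describe.
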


With upper semicontinuous $(v_n)_{n \in \mathbb{N}}$ and $v_{\infty}$, the following theorem concludes our findings about those infinite sequential persuasions that implement the optimal Bayesian persuasion: They exist as long as $O$ is implementable, and can be traced by a set of intermediate states $D \subseteq \Delta(\Omega)$, in which every belief can be spread by a nontrivial persuasion which never lets the belief escape $D \cup O$.
\begin{theorem}\label{thm-implement}
    For any $(\mu, F, v)$ satisfying Assumption \ref{assume-closed-F} and \ref{assume-entropy}, the following statements are equivalent:
    \begin{enumerate}
        \item Set $O \subseteq \Delta(\Omega)$ given by Lemma \ref{lem-support-indentify} is implementable.
        \item There exists some $D \subseteq \Delta (\Omega)$ and some $F_D \subseteq F \backslash T$ such that (i) $\mu \in D \cup O$, (ii) $F_D \cap \mathcal{F}(p) \ne \emptyset$ for every $p \in D$, and (iii) $\cup_{e \in F_D} \tau(e) \subseteq D \cup O$.
        \item There exists $S \in \mathcal{S}_\mu$ such that $\mathcal{V}(S) = \hat v(\mu)$.
    \end{enumerate}
\end{theorem}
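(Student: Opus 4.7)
The plan is to establish the equivalence by the cyclic chain $1 \Rightarrow 2 \Rightarrow 3 \Rightarrow 1$, with statement 2 serving as a stationary ``witness'' that bridges implementability in statement 1 with the existence of an optimal infinite persuasion in statement 3. Let $w : \Delta(\Omega) \to \{0,1\}$ denote the indicator of $O$ and $w_\infty$ the corresponding limit function from Lemma \ref{lem-implement}, so that statement 1 is equivalent to $w_\infty(\mu) = 1$. The natural candidate for the intermediate set is
\[
D := \{p \in \Delta(\Omega) : w_\infty(p) = 1\} \setminus O,
\]
which automatically places $\mu$ in $D \cup O$ as soon as statement 1 holds.

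For $1 \Rightarrow 2$, which I expect to be the main obstacle, I must produce for each $p \in D$ a nontrivial $e^* \in \mathcal{F}(p)$ with $\tau(e^*) \subseteq D \cup O$. Since $w_\infty(p) = 1$ and $w(p) = 0$, I pick finite-step persuasions $(S_k) \in \mathcal{S}_p^{(n_k)}$ with $\mathcal{V}(S_k) \to 1$; the first experiment $e_k$ of each $S_k$ must be nontrivial, and splitting on that first experiment while bounding every continuation by $w_\infty$ at its starting belief yields $\sum_j \lambda_j^k w_\infty(p_j^k) \to 1$. Assumption \ref{assume-closed-F} combined with compactness of $\Delta(\Omega)$ makes $F$ weakly compact, and continuity of $\sigma$ makes $\mathcal{F}(p)$ weakly compact, so a subsequence of $(e_k)$ converges weakly to some $e^* \in \mathcal{F}(p)$. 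The crucial subtlety is that $e^*$ must be nontrivial; this is exactly where Assumption \ref{assume-entropy} enters, since if $e_k \to (1,p)$ weakly then continuity of $H$ on the compact set $\Delta(\Omega)$ would force $\sum_j \lambda_j^k H(p_j^k) \to H(p)$, contradicting the uniform drop $\sum_j \lambda_j^k H(p_j^k) \le H(p) - \delta$. Upper semicontinuity of $w_\infty$ (Lemma \ref{lem-upper-semicont}) together with the Portmanteau theorem then gives $\sum_j \lambda_j^* w_\infty(p_j^*) \ge \limsup_k \sum_j \lambda_j^k w_\infty(p_j^k) = 1$; since $w_\infty \le 1$ everywhere, every $p_j^*$ satisfies $w_\infty(p_j^*) = 1$, so $\tau(e^*) \subseteq D \cup O$. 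Collecting one such $e^*$ for each $p \in D$ yields the required $F_D$.

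For $2 \Rightarrow 3$, I would simulate the Markov strategy that at belief $p$ plays a chosen $e_p \in F_D$ whenever $p \in D$ and the trivial experiment whenever $p \in O$; the induced trajectory stays inside $D \cup O$, with $O$ absorbing. Almost-sure termination follows from Assumption \ref{assume-entropy}: conditional on $b_{n-1} \in D$ the next experiment drops expected entropy by at least $\delta$, so
\[
0 \;\le\; \mathbb{E}[H(b_n)] \;\le\; H(\mu) - \delta \sum_{k=0}^{n-1} \Pr[b_k \in D],
\]
which, together with monotone decrease of $\Pr[b_k \in D]$ (because $O$ is absorbing), forces $\Pr[b_n \in O] \to 1$. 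Letting $f_\mu$ denote the affine supporter of $\hat v$ at $\mu$ from the proof of Lemma \ref{lem-support-indentify}, Bayes plausibility gives $\mathbb{E}[f_\mu(b_n)] = f_\mu(\mu) = \hat v(\mu)$ for every $n$; since the terminal belief $b_\tau$ lies in $O$ almost surely (where $v = f_\mu$), Definition \ref{define-func-v} combined with dominated convergence yields $\mathcal{V}(S) = \mathbb{E}[v(b_\tau)] = \mathbb{E}[f_\mu(b_\tau)] = \hat v(\mu)$.

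For $3 \Rightarrow 1$, given $S \in \mathcal{S}_\mu$ with $\mathcal{V}(S) = \hat v(\mu)$, the bound $v_\infty \le \hat v$ (a direct consequence of Theorem \ref{thm-limit-characterize}) forces $v_\infty(\mu) = \hat v(\mu) = \mathcal{V}(S)$, so $S$ is optimal and Lemma \ref{lem-AS-terminate} makes it terminate almost surely. Applying Bayes plausibility to the same $f_\mu \ge v$ from Lemma \ref{lem-support-indentify}, together with dominated convergence, gives $\mathbb{E}[f_\mu(b_\tau)] = f_\mu(\mu) = \hat v(\mu) = \mathbb{E}[v(b_\tau)]$; since $v \le f_\mu$ pointwise, this forces $v(b_\tau) = f_\mu(b_\tau)$ almost surely, i.e., $b_\tau \in O$ almost surely. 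For any $\epsilon > 0$, truncating $S$ at a step $n$ with $\Pr[\tau \le n] > 1 - \epsilon$ produces an $n$-step persuasion with $\Pr[b_n \in O] \ge \Pr[\tau \le n] > 1 - \epsilon$, establishing implementability of $O$.
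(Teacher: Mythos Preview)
Your proposal is correct and follows the same cyclic route $1 \Rightarrow 2 \Rightarrow 3 \Rightarrow 1$ as the paper, with the identical definition of $D$ via the indicator $w$ of $O$ and the same compactness-plus-entropy extraction of a nontrivial limiting experiment for $1 \Rightarrow 2$. The only visible difference is cosmetic: in $2 \Rightarrow 3$ and $3 \Rightarrow 1$ you work directly with the affine supporter $f_\mu$ and the martingale property of $(b_n)$, whereas the paper argues via the conditional mean of the terminated belief and continuity of $\hat v$ (respectively, via the direct entropy bound on termination and a contradiction at any $p\notin O$); your packaging is slightly cleaner but the content is the same.
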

\begin{proof}[Proof of Theorem \ref{thm-implement}]
    (1) $\Rightarrow$ (2): Suppose $O$ is implementable, consider function $v$ given in Lemma \ref{lem-implement}, define set $D$ by
    \[
        D \coloneqq \{p \in \Delta(\omega); v_\infty(p) = 1\} \setminus O,
    \]
    we hope to show that $D$ and some $F_D$ satisfies (i), (ii) and (iii). By Lemma \ref{lem-implement} we have $\mu \in D \cup O$ so that $D$ satisfies (i). 
    
    For any $p \in D$ and any arbitrary positive decreasing sequence $(\epsilon_i)_{i \ge 1}$ that goes to zero, by Theorem \ref{thm-limit-characterize}, there exists a sequence of nontrivial experiments $(e_i)_{i\ge 1}$ in $\mathcal{F}(p)$ in which every $e_i=(\lambda^{(i)}_j, p^{(i)}_j)_{j \in [m_i]}$ satisfies $\sum_{j \in [m_i]} \lambda^{(i)}_j v_\infty(p^{(i)}_j) \ge 1-\epsilon_i$. With the compactness of $\Delta(\Delta(\Omega))$ and Assumption \ref{assume-closed-F}, there exists a subsequence of $(e_i)_{i \ge 1}$ indexed by $(i_k)_{k \ge 1}$ that weakly converges to some $e^* = (\lambda_j^*, p_j^*)_{j \le m^*} \in F$. Since every $e_{i_k}$ is in $\mathcal{F}(p)$, we have $\sigma(e^*) = p$. Moreover, by Assumption \ref{assume-entropy}, the amount of information revealed by every $e_{i_k}$ is at least $\delta$, therefore $e^*$ is not a trivial experiment.
    
    By Lemma \ref{lem-upper-semicont}, $v_\infty$ is bounded and upper semicontinuous, therefore for the weak limit $e^*$ of $(e_{i_k})_k \ge 1$ we have 
    \begin{align*}
        \sum_{j=1}^{m^*} \lambda_j^* v_\infty(p_j^*) \ge \lim_{k \to \infty} \sum_{j \in [m_{i_k}]]} \lambda^{(i_k)}_j v_\infty(p^{(i_k)}_j)
        \ge \lim_{k \to \infty} 1 - \epsilon_{i_k} 
        = 1.
    \end{align*}
    Therefore $v_\infty(p_j^*) = 1, \forall j \in [m^*]$. We can add $e^*$ to $F_D$ as it satisfies $\tau(e^*) \subseteq D \cup O$. (ii) and (iii) can be satisfied by considering $F_D$ that consists of the corresponding $e^*$ for every element of $D$.
    

    \noindent
    (2) $\Rightarrow$ (3): Suppose there exists $(D, F_D)$ satisfying the requirement of (2). Then we can define an infinite sequential persuasion $S$ starting from $\mu$ as follows. If the current belief is in $D$, take an arbitrary experiment in $F_D$ that is a mean-preserving spread of the current belief, which exists because of (ii). If the current belief is not in $D$, take a trivial experiment. The discussion after Assumption \ref{assume-entropy} tells that $S$ terminates after $n$ steps with probability at least $1-\frac{\log(|\Omega|)}{n\delta}$. For any finite $n$, if $S$ terminates after $n$ steps with positive probability at some specific belief $p \in \Delta(\Omega)$, then by (iii) $p$ has to be in $O$. Therefore, for any $\epsilon > 0$, there exists $n_\epsilon \ge 1$ such that $S$ terminates after $n_\epsilon$ steps at belief in $O$ with probability $1-\epsilon$. Denote by $p_n$ the expectation of $n$-step belief conditioned on termination: 
    \[
        p_n \coloneqq \frac{\sum_{\xi \in \Xi_n} \mathbb{I}[S \text{ terminates after } \xi] \cdot \Pr[\xi] \cdot \text{last}(\xi)}{\sum_{\xi \in \Xi_n} \mathbb{I}[S \text{ terminates after } \xi] \cdot \Pr[\xi]}. 
    \]
    Then every $p_n$ is in $\text{conv}(O)$, by Lemma \ref{lem-support-indentify} we have
    \[
        \hat v(p_n) = \frac{\sum_{\xi \in \Xi_n} \mathbb{I}[S \text{ terminates after } \xi] \cdot \Pr[\xi] \cdot v(\text{last}(\xi))}{\sum_{\xi \in \Xi_n} \mathbb{I}[S \text{ terminates after } \xi] \cdot \Pr[\xi]}.
    \]
    Therefore, for every $\epsilon > 0$ we have
    $\mathcal{V}(S) \ge (1-\epsilon) \cdot \hat v(p_{n_\epsilon})$.
    It can be easily established that $p_{n_\epsilon}$ converges to $\mu$ as $\epsilon$ goes to zero. By the continuity of $\hat v$, we have
    \[
        \mathcal{V}(S) \ge \lim_{\epsilon \to 0} (1-\epsilon) \cdot \hat v(p_{n_\epsilon}) = \hat v(\mu).
    \]

    \noindent
    (3) $\Rightarrow$ (1): As shown in the discussion after Assumption \ref{assume-uniform-as}, for any $S \in \mathcal{S}_\mu$ with $\mathcal{V}(S) = \hat v(\mu)$ and any $\epsilon > 0$, there exists $n_\epsilon \ge 1$ such that $S$ terminates after $n_\epsilon$ steps with probability at least $1-\epsilon$, and since $n_\epsilon$ is finite, there is only a finite number of possible beliefs that $S$ terminates at after $n_\epsilon$ steps with positive probability. Suppose there exists some belief $p \notin O$ so that $S$ terminates after $n$ steps at $p$ with positive probability, then it is easy to obtain $\mathcal{V}(S) < \hat v(\mu)$, a contradiction. Therefore, for every $\epsilon > 0$, the corresponding $n_\epsilon$ and $S^{(n_\epsilon)}$ given by the first $n_\epsilon$ steps of $S$ satisfy the requirement of Definition \ref{define-implement}, and $O$ is implementable.
\end{proof}

\subsection{An infinite-step example}\label{sec-triangle}
The necessity of Assumption \ref{assume-entropy} (or Assumption \ref{assume-uniform-as} in general) can be shown by such an example, which is plotted as Figure \ref{fig-c-necessary}: Consider $\Omega = \{1,2,3\}$ so that each belief can be represented by triplet $(x,y, 1-x-y)$ and plotted in a triangle whose vertices are $A_0 = (0,0,1), B_0 = (0,1,0), C_0 = (1,0,0)$. Function $v$ is given by $v(p) \coloneqq \mathbb{I}[p \in \{A_0, B_0, C_0\}]$, therefore only takes nonzero value one the three vertices of the triangle, which we mark using black dots. The prior is the center of the triangle $\mu = \{\frac13, \frac13, \frac13\}$, and obviously $\hat v(\mu) = 1$. We are going to construct two different $F$ that both satisfy Assumption \ref{assume-closed-F}, while the first one violates Lemma \ref{lem-upper-semicont} and the second one satisfies Lemma \ref{lem-upper-semicont} but violates Theorem \ref{thm-implement}.

To construct the first $F$, for $i = 1, 2, \ldots$, iteratively define beliefs $A_i, B_i, C_i$ by $A_i = \frac12 A_{i-1} + \frac12 B_{i-1}, B_i = \frac12 B_{i-1} + \frac12 C_{i-1}, C_i = \frac12 C_{i-1} + \frac12 A_{i-1}$, and correspondingly construct a set of three nontrivial experiments $F_i = \{e_i^{(1)} = ((\frac12, A_{i-1}), (\frac12, B_{i-1})), e_i^{(2)} = ((\frac12, B_{i-1}), (\frac12, C_{i-1})), e_i^{(3)} = ((\frac12, C_{i-1}), (\frac12, A_{i-1}))\}$. Then take $F \coloneqq (\cup_{i \ge 1} F_i) \cup T$, where $T$ is the set of all trivial experiments. The support size of every element in $F$ is at most 2. Every infinite convergent sequence in $F$ contains at least one infinite subsequence of $(e_i^{(k)})_{i \ge 1}$ for some $k \in \{1,2,3\}$ or an infinite convergent sequence in $T$. If it contains an infinite subsequence of $(e_i^{(k)})_{i \ge 1}$, then the limiting experiment is $(1, \mu)$, which is in $F$. Otherwise the limiting experiment is also a trivial experiment, hence also in $F$. By induction, one can easily check $v_\infty(A_i) = v_\infty(B_i) = v_\infty(C_i) = 1, \forall i \ge 1$. Moreover, $(A_i)_{i \ge 1}$ is a sequence in $\Delta(\Omega)$ that converges to $\mu$. However, $F$ does not contain any nontrivial experiment that is a mean-preserving spread of $\mu$, therefore $v_\infty(\mu) = 0$. This shows that $v_\infty$ is not upper semicontinuous.

The second $F$ is constructed by adding a sequence of experiments to the first one: Take a fixed belief $W \coloneqq (\frac5{12}, \frac16, \frac1{52})$. For every $i = 0, 1, \ldots$, note that belief $D_i \coloneqq (\frac13 - \frac{1}{3\cdot 4^i}, \frac13 + \frac{2}{3\cdot 4^i}, \frac13 - \frac{1}{3\cdot 4^i})$ is one of $A_i$, $B_i$ and $C_i$. We also add experiments $e_i^{(4)} = ((\frac{1}{1+4^{i-1}}, W), (\frac{4^{i-1}}{1+4^{i-1}}, D_i))$ to $F$. One can easily check that after adding $e_i^{(4)}$ to $F$, Assumption \ref{assume-closed-F} still holds. Every $e_i^{(4)}$ is a mean-preserving spread of $\mu$ into $W$ with $v(W) = v_\infty(W) = 0$ and $D_i$ with $v_\infty(D_i) = 1$, and the probability that the outcome of $e_i^{(4)}$ changes belief to $D_i$ goes to 1 as $i$ goes to infinity, which gives $v_\infty(\mu) = 1$. However, every sequential persuasion starting from $\mu$ has to choose some $e_i^{(4)}$ as its first nontrivial experiment, which always gives belief $W$ some positive probability. Therefore, the expected utility of every finite-step or infinite sequential persuasion is strictly smaller than 1.

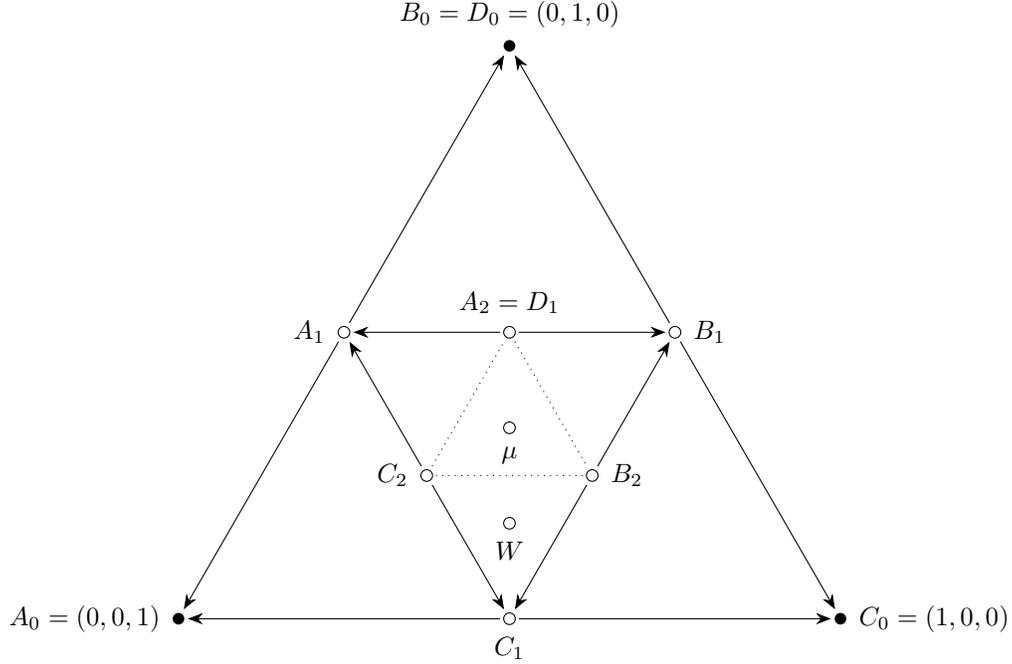
\begin{figure}
    \centering
    
    \begin{tikzpicture}[scale = 2.2, >={Stealth[scale = 1.2]}]
        \path (0,0) node[label=left:{$A_0=(0,0,1)$}] (A0){}
        -- ++(60:4) node[label=above:{$B_0=D_0 = (0,1,0)$}] (B0){}
        -- ++(-60:4) node[label=right:{$C_0=(1,0,0)$}] (C0){};
        \fill (A0) circle(1pt);
        \fill (B0) circle(1pt);
        \fill (C0) circle(1pt);
        \path (60:2) node[label=left:{$A_1$}] (A1){}
        -- ++(0:2) node[label=right:{$B_1$}] (B1){}
        -- ++(-120:2) node[label=below:{$C_1$}] (C1){};
        \draw (A1) circle(1pt);
        \draw (B1) circle(1pt);
        \draw (C1) circle(1pt);
        \draw[->] (A1) -- (A0);
        \draw[->] (A1) -- (B0);
        \draw[->] (B1) -- (B0);
        \draw[->] (B1) -- (C0);
        \draw[->] (C1) -- (C0);
        \draw[->] (C1) -- (A0);
        \path (60:2) --
        ++ (0:1) node[label=above:{$A_2 = D_1$}] (A2){}
        -- ++(-60:1) node[label=right:{$B_2$}] (B2){}
        -- ++(180:1) node[label=left:{$C_2$}] (C2){};
        \draw (A2) circle(1pt);
        \draw (B2) circle(1pt);
        \draw (C2) circle(1pt);
        \draw[->] (A2) -- (A1);
        \draw[->] (A2) -- (B1);
        \draw[->] (B2) -- (B1);
        \draw[->] (B2) -- (C1);
        \draw[->] (C2) -- (C1);
        \draw[->] (C2) -- (A1);
        \draw (30:2.3094) circle(1pt) node [label = below:{$\mu$}]  (mu){};
        \draw[dotted] (A2) -- (B2) -- (C2) -- (A2);
        \path (0:2) -- ++(90:0.57735) node [label = below:{$W$}] (W){};
        \draw (W) circle(1pt);
    \end{tikzpicture}
    
    \caption{An example that violates Assumption \ref{assume-entropy}}
    \label{fig-c-necessary}
\end{figure}

\section{Optimal Markov sequential persuasion}
In this section we focus on the existence of optimal sequential persuasions. An optimal sequential persuasion is defined to be a sequential persuasion whose expected utility is equal to the corresponding supremum given by function $v_n$ or $v_\infty$:
\begin{define}
    For every $n \in \mathbb{N}$, $S^{(n)} \in \mathcal{S}^{(n)}_\mu$ is an optimal $n$-step sequential persuasion starting from $\mu$ if $\mathcal{V}(S^{(n)}) = v_n(\mu)$. $S \in \mathcal{S}_\mu$ is an optimal infinite sequential persuasion starting from $\mu$ if $\mathcal{V}(S) = v_\infty(\mu)$.
\end{define}
Under Assumption \ref{assume-closed-F}, by Lemma \ref{lem-upper-semicont} $v_n$ is upper semicontinuous for all $n \in \mathbb{N}$. Therefore, every supremum in equation (\ref{eq-iterate}) can be attained by some $e \in \mathcal{F}_p$, from which one can easy derive the existence of optimal $n$-step sequential persuasions.
\begin{coro}
    Under Assumption \ref{assume-closed-F}, for any $\mu \in \Delta(\Omega)$, there exists an optimal $n$-step sequential persuasion starting from $\mu$.
\end{coro}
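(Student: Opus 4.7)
The plan is to induct on $n$. The base case $n=0$ is immediate because $v_0 = v$ and the zero-step persuasion $(\Xi_0) = (\{(\mu)\})$ has $\mathcal{V} = v(\mu)$. For the inductive step, assume that for every prior $p \in \Delta(\Omega)$ there exists an optimal $(n-1)$-step sequential persuasion starting from $p$. By Lemma \ref{lem-iterate},
\[
    v_n(\mu) = \sup_{e \in \mathcal{F}(\mu)} \sum_{j=1}^m \lambda_j v_{n-1}(p_j),
\]
so I would split the argument into (i) producing a maximizer $e^* \in \mathcal{F}(\mu)$ of this supremum, and (ii) splicing optimal $(n-1)$-step persuasions onto each support point of $e^*$.

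For step (i), I invoke compactness and upper semicontinuity. Since $\Delta(\Omega)$ is a compact metric space, Prokhorov's theorem makes $\Delta(\Delta(\Omega))$ compact in the weak topology; Assumption \ref{assume-closed-F} states $F$ is weakly closed inside this space, so $F$ is compact. The barycenter map $\sigma$ is weakly continuous, so the slice $\mathcal{F}(\mu) = \{e \in F : \sigma(e) = \mu\}$ is a closed subset of $F$, hence compact. By Lemma \ref{lem-upper-semicont}, $v_{n-1}$ is bounded and upper semicontinuous on $\Delta(\Omega)$, so a standard Portmanteau-type argument shows that $e \mapsto \int v_{n-1}\, de$ is upper semicontinuous in $e$ on $\mathcal{F}(\mu)$. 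An upper semicontinuous function on a nonempty compact set attains its maximum (note $\mathcal{F}(\mu)$ is nonempty since it contains the trivial experiment $(1,\mu)$), yielding some $e^* = (\lambda_j^*, p_j^*)_{j \in [m^*]} \in \mathcal{F}(\mu)$ with $\sum_j \lambda_j^* v_{n-1}(p_j^*) = v_n(\mu)$. For step (ii), the inductive hypothesis supplies an optimal $(n-1)$-step persuasion $S_j^{(n-1)}$ starting from each $p_j^*$ with $\mathcal{V}(S_j^{(n-1)}) = v_{n-1}(p_j^*)$. Define $S^{(n)}$ by taking $e^*$ as the first experiment and then, conditional on outcome $p_j^*$, continuing with $S_j^{(n-1)}$. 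By linearity of expectation,
\[
    \mathcal{V}(S^{(n)}) = \sum_{j=1}^{m^*} \lambda_j^* \,\mathcal{V}(S_j^{(n-1)}) = \sum_{j=1}^{m^*} \lambda_j^* v_{n-1}(p_j^*) = v_n(\mu).
\]

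The main obstacle is ensuring that the maximizing $e^*$ extracted from a weakly convergent sequence really lies in $F \subseteq F_0$, rather than escaping to an element of $\Delta(\Delta(\Omega)) \setminus F_0$ with infinite support. The bounded-support clause $|\tau(e)| \le h$ in Assumption \ref{assume-closed-F} is exactly what precludes this: given a maximizing sequence $(e_k)$ in $\mathcal{F}(\mu)$, I would pass to a subsequence with constant support size $m \le h$, extract coordinate-wise convergent sub-subsequences of the weights in $[0,1]^m$ and the atoms in $\Delta(\Omega)^m$, and merge any coinciding atoms to produce a bona fide element of $F_0$; the closedness half of Assumption \ref{assume-closed-F} then places this limit in $F$, and the continuity of $\sigma$ places it in $\mathcal{F}(\mu)$. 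Everything else in the argument is routine once Lemma \ref{lem-upper-semicont} is in hand.
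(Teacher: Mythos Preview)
Your proposal is correct and follows essentially the same route as the paper: the paper states the corollary with only a one-line justification (``by Lemma \ref{lem-upper-semicont} $v_n$ is upper semicontinuous, so every supremum in \eqref{eq-iterate} is attained, whence the existence follows''), and your argument is a faithful and complete unpacking of that sentence---induction on $n$, compactness of $\mathcal{F}(\mu)$ via Assumption \ref{assume-closed-F}, upper semicontinuity of the objective via Lemma \ref{lem-upper-semicont}, and splicing optimal continuations onto the maximizing first experiment. Your discussion of the bounded-support clause is also on point and mirrors the role it plays in the proof of Lemma \ref{lem-upper-semicont}.
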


For infinite sequential persuasions, we are going to derive the following results: First, under Assumption \ref{assume-closed-F}, an optimal infinite sequential persuasion exists if and only if $(\mu, F, v)$ also satisfies Assumption \ref{assume-uniform-as}. 
Moreover, as mentioned in Section \ref{sec-discuss-markov}, when an optimal infinite sequential persuasion exists, there actually exists some optimal infinite sequential persuasion which is \emph{Markov}, that is, the choice of the next experiment only depends on the receiver's current belief. 
\begin{define}\label{define-markov}
    For any tuple $(\mu, D, Z, \rho)$ where:
    \begin{enumerate}
        \item $D \subseteq \Delta(\Omega)$ is a the set of intermediate beliefs,
        \item $Z \subseteq \Delta(\Omega) \backslash D$ is a the set of terminating beliefs, $\mu \in D \cup Z$,
        \item $\rho: D \rightarrow F \backslash T$ that determines the next experiment with $\sigma(\rho(p)) = p$ and $\tau(\rho(p)) \subseteq D \cup Z$, for all $p \in D$,
    \end{enumerate}
    an infinite sequential persuasion $S = (\Xi_0, \phi_1, \Xi_1, \ldots)$ starting from $\mu$ is a Markov sequential persuasion compatible with $(\mu, D, Z, \rho)$ if for every $i \in \mathbb{N}$ and every $\xi \in \Xi_i$, 
    \begin{itemize}
        \item if $\text{last}(\xi) \in D$, then $\phi_i(\xi) = \rho(\text{last}(\xi))$;
        \item if $\text{last}(\xi) \in Z$, then $\phi_i(\xi) = (1, \text{last}(\xi))$.
    \end{itemize}
    An infinite sequential persuasion $S$ starting from $\mu$ is an optimal Markov sequential persuasion if it is a Markov sequential persuasion compatible with some tuple $(\mu, D, Z, \rho)$, and is an optimal infinite sequential persuasion.
\end{define}

Recall the second $F$ we construct for the example of Figure \ref{fig-c-necessary}. No optimal infinite sequential persuasion exists for the $(\mu, F, v)$ because no matter which experiment is used as the first step to change belief from $\mu$, the new belief always comes to $W$ with positive probability, which causes a loss of the expected utility, making it lower than the supremum $v_\infty(\mu)$. In this sense, every feasible experiments that can change $\mu$ is ``imperfect''. This motivates the following definition of ``perfect'' experiments in $F$.
\begin{define}
    Given $(F, v)$, experiment $e = (\lambda_j, p_j)_{j \in [m]} \in F$ is an exact experiment if
    \[
        v_\infty(\sigma(e)) = \sum_{j \in [m]} \lambda_j v_\infty(p_j).
    \]
    Also, denote by $N$ the set where $v_\infty$ and $v$ coincide:
    \[
        N \coloneqq \{p \in \Delta(\Omega); v_\infty(p) = v(p)\}.
    \]
\end{define}
\begin{lemma}\label{lem-Markov-optimal}
    A Markov sequential persuasion $S \in \mathcal{S}_\mu$ compatible with $(\mu,D,Z,\rho)$ is optimal if $Z \subseteq N$, $\rho$ only takes exact experiments, and the for every $\epsilon > 0$, there exists $n \in \mathbb{N}$ such that the $n$-step belief $b_n$ induced by $S$ satisfies $\Pr[b_n \in Z] \ge 1-\epsilon$.
\end{lemma}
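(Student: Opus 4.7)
The plan is to prove $\mathcal{V}(S) \ge v_\infty(\mu)$, since the reverse inequality is immediate from the definition of $v_\infty$ as a supremum. The core observation is that using only exact experiments keeps the expected value of $v_\infty$ invariant step by step, and the hypothesis that $S$ eventually concentrates on $Z \subseteq N$ lets this preserved value be transferred into the reported utility $\mathcal{V}(S)$.

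First I would show by induction on $n$ that $\mathbb{E}[v_\infty(b_n)] = v_\infty(\mu)$ for every $n \in \mathbb{N}$. The base case $n = 0$ is immediate. For the inductive step, fix $\xi \in \Xi_{n-1}$ with $\text{last}(\xi) = p$; by the Markov rule, if $p \in D$ then $\phi_n(\xi) = \rho(p) = (\lambda_j, p_j)_{j \in [m]}$ is exact, so the descendants of $\xi$ in $\Xi_n$ contribute $\Pr[\xi] \sum_j \lambda_j v_\infty(p_j) = \Pr[\xi]\, v_\infty(p)$ by the definition of exactness; if $p \in Z$ then $\phi_n(\xi) = (1,p)$ and the unique descendant contributes the same. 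Summing over $\xi \in \Xi_{n-1}$ and invoking the inductive hypothesis yields the claim. A parallel induction also shows that every reachable belief lies in $D \cup Z$.

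Next I would observe that, since $\rho$ takes values in $F \backslash T$, a history under $S$ terminates in the sense of Definition~\ref{define-terminate} if and only if its last belief lies in $Z$. Using $v_\infty = v$ on $Z \subseteq N$ and splitting the identity $\mathbb{E}[v_\infty(b_n)] = v_\infty(\mu)$ according to whether $b_n \in Z$ or $b_n \in D$, one obtains
\[
    \sum_{\xi \in \Xi_n :\, \text{last}(\xi) \in Z} \Pr[\xi]\cdot v(\text{last}(\xi)) \;=\; v_\infty(\mu) - \mathbb{E}\bigl[v_\infty(b_n)\,\mathbb{I}[b_n \in D]\bigr] \;\ge\; v_\infty(\mu) - \overline V \cdot \Pr[b_n \in D],
\]
where the inequality uses $0 < \underline V \le v_\infty \le \overline V$.

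Finally, given any $\epsilon > 0$, the hypothesis supplies some $n$ with $\Pr[b_n \in Z] \ge 1-\epsilon$, hence $\Pr[b_n \in D] \le \epsilon$ because every reachable belief is in $D \cup Z$. The previous display then provides a lower bound of $v_\infty(\mu) - \overline V \epsilon$ on a single term in the supremum defining $\mathcal{V}(S)$, so $\mathcal{V}(S) \ge v_\infty(\mu) - \overline V \epsilon$; letting $\epsilon \to 0$ concludes the proof. No step is particularly challenging; the heart of the argument is the inductive identity for $\mathbb{E}[v_\infty(b_n)]$, which would fail the moment $\rho$ were allowed any non-exact experiment.
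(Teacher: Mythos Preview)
Your proposal is correct and follows essentially the same approach as the paper: both prove by induction on $n$ that the step-$n$ expectation of $v_\infty$ along the process equals $v_\infty(\mu)$ (the paper phrases the invariant as $\mathbb{E}[\mathbb{I}[b_n \in Z]\,v(b_n)] + \mathbb{E}[\mathbb{I}[b_n \notin Z]\,v_\infty(b_n)] = v_\infty(\mu)$, which is the same identity once one uses $v = v_\infty$ on $Z\subseteq N$), and then use the concentration on $Z$ together with the bound $v_\infty\le\overline V$ to pass to $\mathcal{V}(S)$. Your formulation of the invariant is in fact slightly cleaner.
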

To show the existence of an optimal infinite sequential persuasion, we need a more structured representation of infinite sequential persuasions, which we define as follows. The basic structure is a tree of infinite depth, so that the $n$-step histories are encoded by vertices of depth $n$, and the topology of the tree tells how the history grows as more experiments are taken. With Assumption \ref{assume-closed-F}, every experiment generates at most $h$ different beliefs, and we require that every vertex in the tree has exactly $h$ children.
\begin{define}\label{def-h-branch}
    An $h$-branching sequential persuasion $B \coloneqq (C_n, \pi_n, \beta_n, \eta_n)_{n \in \mathbb{N}}$ starting from $\mu$ is defined on an infinite-depth tree where every vertex has $h$ children, and for every $n \in \mathbb{N}$,
    \begin{itemize}
        \item $C_n \coloneqq \{c_{n, m} | 1 \le m \le h^n\}$ is the set of $h^n$ vertices of depth $n$ in the tree, the set of children of $c_{n,m}$ being $\Gamma(n, m) \coloneqq \{c_{n+1, (m-1)h+j} | 1 \le j \le h\}$;
        \item a probability distribution $\pi_n$ over $C_n$;
        \item two mappings $\beta_n: C_n \rightarrow \Delta(\Omega)$ and $\eta_n: C_n \rightarrow F$,
    \end{itemize}
    such that:
    \begin{itemize}
        \item Belief transitions are compatible: $\beta_0(c_{0,1}) = \mu$, $\sigma(\eta_n(c_{n, m})) = \beta_n(c_{n, m})$ and $\tau(\eta_n(c_{n, m})) \subseteq \{\beta_{n+1}(c) | c \in \Gamma(n, m)\}$ for every $n \in \mathbb{N}$ and $1 \le m \le h^n$.
        \item Probability distributions are consistent: For every $n \in \mathbb{N}$ and $1 \le m \le h^n$, for every $p$ assigned with probability $\lambda \ge 0$ by $\eta_n(c_{n, m})$, $\sum_{c \in \Gamma(n, m)} \mathbb{I}[\beta_{n+1}(c) = p] \cdot \pi_{n+1}(c) = \pi_n(c_{n, m}) \cdot \lambda$.
    \end{itemize}
    And $B = (C_n, \pi_n, \beta_n, \eta_n)_{n \in \mathbb{N}}$ terminates after $c_{n,m}$ if $\eta_n(c_{n,m}) \in T$ and every descendant $c_{n',m'}$ of $c_{n,m}$ with $\pi_{n'}(c_{n', m'}) > 0$ satisfies $\eta_{n'}(c_{n', m'}) \in T$. The probability of termination after $n$ steps is
    \[
        \Pr[B \text{ terminates after } n \text{ steps}] \coloneqq \sum_{c \in C_n} \mathbb{I}[B \text{ terminates after } c] \cdot \pi_n(c),
    \]
    and the expected utility is
    \[
        \mathcal{V}(B) \coloneqq \sup_{n \in \mathbb{N}} \sum_{c \in C_n} \mathbb{I}[B \text{ terminates after } c] \cdot \pi_n(c) \cdot v(\beta_n(c)).
    \]
\end{define}
As stated by the following lemma, every infinite sequential persuasion can be represented by an $h$-branching sequential persuasion. Under Assumption \ref{assume-closed-F}, $h$-branching sequential persuasions actually represent a larger set than the set of infinite sequential persuasions, because it allows some randomness over the choice of experiments: For two children of $c_{n,m}$ denoted by $c, c' \in \Gamma(n, m)$, we do not require $\beta_{n+1}(c) \ne \beta_{n+1}(c')$. Therefore, $c$ and $c'$ can be induced by the same outcome of the experiment $\eta_n(c_{n,m})$ and therefore encode the same history $\xi \in \Xi_{n+1}$, but the $h$-branching sequential persuasion can take different experiments after $c$ and $c'$. However, the randomness of the experiment choice is limited, since only at most $h$ random choices are available after $c_{n,m}$.
\begin{lemma}\label{lem-branch-represent}
    Under Assumption \ref{assume-closed-F}, every $S = (\Xi_0, \phi_1, \Xi_1, \ldots) \in \mathcal{S}_\mu$ can be represented by some $h$-branching sequential persuasion $B = (C_n, \pi_n, \beta_n, \eta_n)_{n \in \mathbb{N}}$ starting from $\mu$, so that for every $n \in \mathbb{N}$, there exists mapping $r_n: C_n \rightarrow \Xi_n$ such that:
    \begin{itemize}
        \item $\Pr[\xi] = \sum_{c \in C_n} \mathbb{I}[r_n(c) = \xi] \cdot \pi_n(c)$ for every $\xi \in \Xi_n$;
        \item $\text{last}(\xi) = \beta_n(c)$ and $\phi_{n+1}(\xi) = \eta_n(c)$ for every $\xi \in \Xi_n$ and every $c \in C_n$ with $r_n(c) = \xi$.
        \item $\xi \in \Xi_n$ is a prefix of $\xi' \in \Xi_{n'}$ if any only if for every $c' \in C_{n'}$ with $r_{n'}(c') = \xi'$, there exists $c \in C_n$ such that $r_n(c) = \xi$ and $c$ is an ancestor of $c'$.
    \end{itemize}
\end{lemma}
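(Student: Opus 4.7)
The plan is to construct $B = (C_n, \pi_n, \beta_n, \eta_n)_{n \in \mathbb{N}}$ and the associated mappings $(r_n)_{n \in \mathbb{N}}$ by induction on depth $n$, embedding the irregular history tree $(\Xi_n)_{n \in \mathbb{N}}$ of $S$ into the rigid $h$-ary tree by inserting zero-probability \emph{padding} children whenever the experiment at a vertex has support of size strictly less than $h$. The base case at depth $0$ sets $\beta_0(c_{0,1}) \coloneqq \mu$, $\pi_0(c_{0,1}) \coloneqq 1$, $\eta_0(c_{0,1}) \coloneqq \phi_1((\mu))$, and $r_0(c_{0,1}) \coloneqq (\mu)$; every condition is trivial at this depth.

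For the inductive step, fix $c_{n,m} \in C_n$ with $\xi \coloneqq r_n(c_{n,m})$, and write $\eta_n(c_{n,m}) = \phi_{n+1}(\xi) = (\lambda_j, p_j)_{j \in [m']}$ where $m' \le h$ by Assumption \ref{assume-closed-F}. On the first $m'$ children $c_{n+1, (m-1)h+j}$ of $c_{n,m}$ I set $\beta_{n+1} \coloneqq p_j$, $\pi_{n+1} \coloneqq \pi_n(c_{n,m}) \cdot \lambda_j$, $r_{n+1} \coloneqq \xi \oplus (\eta_n(c_{n,m}), p_j)$, and $\eta_{n+1} \coloneqq \phi_{n+2}(r_{n+1})$. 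On the remaining $h - m'$ padding children I set $\beta_{n+1} \coloneqq p_1$, $\pi_{n+1} \coloneqq 0$, $r_{n+1} \coloneqq \xi \oplus (\eta_n(c_{n,m}), p_1)$, and $\eta_{n+1} \coloneqq \phi_{n+2}(r_{n+1})$; any descendant of a padding vertex inherits zero probability by the same recipe.

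The verification is then routine. Belief compatibility $\sigma(\eta_n(c_{n,m})) = \beta_n(c_{n,m})$ follows from $\sigma(\phi_{n+1}(\xi)) = \text{last}(\xi)$ combined with the inductive identity $\beta_n(c_{n,m}) = \text{last}(\xi)$; the support inclusion $\tau(\eta_n(c_{n,m})) \subseteq \{\beta_{n+1}(c) : c \in \Gamma(n,m)\}$ is immediate since the first $m'$ children enumerate $\tau(\eta_n(c_{n,m}))$. Probability consistency holds because for each $p_j \in \tau(\eta_n(c_{n,m}))$, only the $j$-th real child contributes $\pi_n(c_{n,m}) \cdot \lambda_j$, and padding children contribute zero. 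Condition (1) follows inductively from equation \eqref{eq-history-distrib}; condition (2) is built into the definition of $\beta_n$ and $\eta_n$ through $r_n$; and condition (3) follows because tree-descent in $B$ exactly encodes sequential extension of histories in $(\Xi_n)_{n \in \mathbb{N}}$, with padding branches merely duplicating an existing extension without contributing probability.

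The only delicate point is reconciling the rigid $h$-ary branching of $B$ with the variable support size of feasible experiments in $S$. The key observation that renders this painless is that padding children carry zero probability, so $\mathbb{I}[\beta_{n+1}(c) = p] \cdot \pi_{n+1}(c) = 0$ for any padding $c$ regardless of which belief is assigned to $\beta_{n+1}(c)$; hence the probability-consistency identity cannot be violated by the padding choice. Once this is observed, the rest is bookkeeping on the induction.
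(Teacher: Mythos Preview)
Your proposal is correct and follows essentially the same approach as the paper: an inductive construction of $(r_n,\pi_n,\beta_n,\eta_n)$ that embeds $S$ into the $h$-ary tree by padding out short-support experiments with zero-probability children carrying a repeated belief. The paper's proof is terser (it omits the verification you spell out and allows the padding belief to be any $p_j$ rather than specifically $p_1$), but the construction and the reasoning are the same.
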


To prove the existence of an optimal Markov sequential persuasion, we first prove that there exists an $h$-branching sequential persuasion $B^*$ starting from $\mu$ such that $\mathcal{V}(B^*) = v_\infty(\mu)$. Then, we use Lemma \ref{lem-Markov-optimal} to reduce the existence of an optimal Markov sequential persuasions to the existence of optimal memoryless and deterministic strategies of an MDP with a reachability objective. For the reachability objective, the existence of an optimal strategy of the MDP, which is $B^*$, guarantees the existence of an optimal memoryless and deterministic strategy.

Intuitively, $B^*$ is constructed by step-by-step convergence of some subsequence of a sequence $(S_k^{(n_k)})_{k \ge 1}$ that satisfies Assumption \ref{assume-uniform-as}. For every $n \in \mathbb{N}$, there exists some subsequence whose first $n$ steps converges, with certain proper notion of convergence, to the first $n$ steps of an $h$-branching sequential persuasion. Then for $n=1,2,\ldots$, we can iteratively take subsequence, so that for every $n$, the subsequence whose first $n+1$ steps converge is contained in the subsequence whose first $n$ steps converge. Therefore, the two limiting prefixes of $h$-branching sequential persuasions are identical for the first $n$ steps. This allows us to define an $h$-branching sequential persuasion which infinite steps, whose every finite prefix is the limit of some subsequence.

\begin{lemma}\label{lem-iterate-converge}
    Suppose $F$ satisfies Assumption \ref{assume-closed-F}, and $(S^{(n_k)}_k)_{k \ge 1}$ is a sequence of finite-step sequential persuasions starting from $\mu$ that satisfies the conditions of Assumption \ref{assume-uniform-as}. Extend each $S^{(n_k)}_k$ to an infinite sequential persuasion $S_k$ by repeating trivial experiments after the first $n_k$ steps, represent each $S_k$ by an $h$-branching sequential persuasion $B_k = (C_n, \pi^{(k)}_n, \beta^{(k)}_n, \eta^{(k)}_n)_{n \in \mathbb{N}}$, and denote the obtained sequence by $(B_k)_{k \ge 1}$. Then, there exists an $h$-branching sequential persuasion $B^* = (C_n, \pi^{*}_n, \beta^{*}_n, \eta^{*}_n)_{n \in \mathbb{N}}$ satisfying: for any $\epsilon > 0$ and the corresponding $n_\epsilon$ given in Assumption \ref{assume-uniform-as},
    \[
        \Pr[B^* \text{ terminates after } n_\epsilon \text{ steps}] \ge 1-\epsilon,
    \]
    and,
    for every $n \ge 1$, there exists a subsequence $(B_{k_l})_{l \ge 1}$ such that: for any $1 \le i \le n$, $(\pi^{(k_l)}_n)_{l \ge 1}$ pointwise converges to $\pi^{*}_n$, $(\beta^{(k_l)}_n)_{l \ge 1}$ pointwise converges in total variation to $\beta^{*}_n$, and $(\eta^{(k_l)}_n)_{l \ge 1}$ pointwise weakly converges to $\eta^{*}_n$.
\end{lemma}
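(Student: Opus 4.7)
The plan is to construct $B^*$ via a diagonal extraction argument that exploits compactness of the three components at every fixed depth, and then to check that the limit is both a valid $h$-branching sequential persuasion and inherits the uniform termination bound from Assumption \ref{assume-uniform-as}.

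\textbf{Compactness at each depth.} Fix $n \in \mathbb{N}$. Since $C_n$ is finite (of size $h^n$), $\pi_n^{(k)}$ lies in the compact simplex $\Delta(C_n)$, and $\beta_n^{(k)}$ lies in $\Delta(\Omega)^{C_n}$, compact in total variation because $\Omega$ is finite. For $\eta_n^{(k)}$, Assumption \ref{assume-closed-F} supplies both a uniform support bound $h$ and closedness of $F$ under weak limits, so Prokhorov's theorem applied to the compact metric space $\Delta(\Omega)$ makes $F$ weakly compact, and hence so is $F^{C_n}$. Thus the triple $(\pi_n^{(k)}, \beta_n^{(k)}, \eta_n^{(k)})$ lies in a compact product space for each $n$.

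\textbf{Diagonal extraction and verification.} Starting from the full sequence, I iteratively extract nested index sets $K_1 \supseteq K_2 \supseteq \cdots$ so that along $K_n$ all three components converge at every depth $i \le n$. I then define $B^* = (C_n, \pi_n^*, \beta_n^*, \eta_n^*)_{n \in \mathbb{N}}$ using these common limits, with consistency across depths automatic because $K_{n+1} \subseteq K_n$. The subsequence indexed by $K_n$ supplies the $(B_{k_l})_{l \ge 1}$ required for each $n$ in the statement. To verify $B^*$ is a valid $h$-branching sequential persuasion, weak continuity of the mean functional on the compact $\Delta(\Omega)$ gives $\sigma(\eta_n^*(c_{n,m})) = \beta_n^*(c_{n,m})$. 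For the support inclusion $\tau(\eta_n^*(c_{n,m})) \subseteq \{\beta_{n+1}^*(c) : c \in \Gamma(n,m)\}$ and the probability consistency, I use that each $\eta^{(k)}_n(c_{n,m})$ has at most $h$ atoms while $\Gamma(n,m)$ has exactly $h$ children: after a further subsequence (absorbable into $K_{n+1}$), the combinatorial assignment of atoms to children is constant, so atom locations $p_j^{(k)}$ and weights $\lambda_j^{(k)}$ converge coordinate-wise to those of $\eta_n^*(c_{n,m})$, with possibly collapsing atoms in the limit, which only shrinks the support and preserves both the inclusion and the summed-weight identities.

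\textbf{Termination bound and main obstacle.} For each $k$ with $n_k \ge n_\epsilon$, the finite set $A_k \subseteq C_{n_\epsilon}$ of vertices after which $B_k$ terminates satisfies $\sum_{c \in A_k} \pi_{n_\epsilon}^{(k)}(c) \ge 1-\epsilon$ by Assumption \ref{assume-uniform-as}. Since $C_{n_\epsilon}$ has only finitely many subsets, a pigeonhole refinement (absorbable into $K_{n_\epsilon}$) makes $A_k \equiv A$ constant. For each $c \in A$ and each descendant $c'$ at any finite depth $n' \ge n_\epsilon$ with $\pi^*_{n'}(c') > 0$, one has $\pi^{(k_l)}_{n'}(c') > 0$ for $l$ large, hence $\eta^{(k_l)}_{n'}(c') \in T$ because $B_{k_l}$ terminates after $c$; since $T$ is closed under weak convergence (a weak limit of Dirac measures is the Dirac at the total-variation limit of its atoms), $\eta^*_{n'}(c') \in T$, so $c$ terminates in $B^*$. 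Then $\sum_{c \in A} \pi^*_{n_\epsilon}(c) = \lim_l \sum_{c \in A} \pi^{(k_l)}_{n_\epsilon}(c) \ge 1-\epsilon$. I expect the main obstacle to be the compatibility check in the extraction paragraph: passing support inclusion and probability consistency to the limit requires coordinating three different topologies across adjacent tree levels, and the bounded support size $h$ from Assumption \ref{assume-closed-F} is precisely what lets us match atoms of $\eta^*_n(c_{n,m})$ with specific children of $c_{n,m}$ after subsequence extraction rather than losing mass to a possibly infinite-support weak limit.
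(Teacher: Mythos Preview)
Your proposal is correct and follows essentially the same route as the paper: iterated subsequence extraction via compactness at each finite depth (the paper packages a single step of this as a separate helper lemma and then applies it inductively), with Assumption~\ref{assume-closed-F} supplying both the bounded branching and the closedness needed for the limits to remain in $F$ and for the support/consistency conditions to pass to the limit.

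The only substantive difference is in the termination argument. The paper argues by contradiction: if $B^*$ failed the bound at some $n_\epsilon$, one finds a finite depth $M$ at which nontrivial-experiment witnesses account for more than $\epsilon$ of the mass, then pushes those witnesses back to some approximant $B_{k_L}$ (using that a weak limit being nontrivial forces the tail of the sequence to be nontrivial), contradicting Assumption~\ref{assume-uniform-as}. You instead argue directly by pigeonholing on the terminating-vertex sets $A_k \subseteq C_{n_\epsilon}$ and passing triviality forward to $B^*$. Your route is arguably cleaner, but watch one coordination point: the bound must hold for \emph{every} $\epsilon$, and checking that each $c \in A$ terminates in $B^*$ requires convergence at all depths $n' \ge n_\epsilon$, not just at $n_\epsilon$. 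So you should either fold a pigeonhole refinement into the construction of $K_n$ at \emph{every} depth $n$ (finitely many subsets of $C_n$, so this is harmless), or first pass to a single diagonal subsequence on which convergence holds at all depths and then pigeonhole post hoc for each fixed $\epsilon$. As written, ``absorbable into $K_{n_\epsilon}$'' leaves this uniformity across $\epsilon$ implicit.
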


When Lemma \ref{lem-iterate-converge}, we are ready to show that under Assumption \ref{assume-closed-F}, an optimal infinite sequential persuasion exists if an only if the sequence in Assumption \ref{assume-uniform-as} exists. Note that the necessity of Assumption \ref{assume-uniform-as} can be trivially obtained by considering the sequence of finite-step sequential persuasions given by the first $n$ steps of the optimal infinite sequential persuasion, for $n=1,2,\ldots$ and using Lemma \ref{lem-AS-terminate}. Moreover, to achieve optimal expected utility, it is without loss of generality to consider Markov sequential persuasions.
\begin{theorem}\label{thm-opt-exist}
    Under Assumption \ref{assume-closed-F}, there exists an optimal infinite sequential persuasion starting from $\mu$ if and only if Assumption \ref{assume-uniform-as} holds. If some optimal infinite sequential persuasion starting from $\mu$ exists, there exists an optimal Markov sequential persuasion starting from $\mu$. 
\end{theorem}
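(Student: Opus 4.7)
The plan is to establish necessity by truncating an optimal persuasion, and sufficiency (together with the Markov refinement) by extracting an $h$-branching limit via Lemma \ref{lem-iterate-converge} and reducing the Markov part to a reachability MDP. For necessity, assume $S^* \in \mathcal{S}_\mu$ satisfies $\mathcal{V}(S^*) = v_\infty(\mu)$. Since $\underline V > 0$, Lemma \ref{lem-AS-terminate} forces $S^*$ to terminate with probability one, so for every $\epsilon > 0$ there exists $n_\epsilon$ with $\Pr[S^* \text{ terminates after } n_\epsilon \text{ steps}] \ge 1 - \epsilon$. Let $S_k^{(n_k)}$ be the $k$-step truncation of $S^*$; then $\mathcal{V}(S_k^{(n_k)}) \to v_\infty(\mu)$, and for every $k \ge n_\epsilon$ the first $n_\epsilon$ steps of $S_k^{(n_k)}$ coincide with those of $S^*$, hence $\Pr[S_k^{(n_k)} \text{ terminates after } n_\epsilon \text{ steps}] \ge 1 - \epsilon$, verifying Assumption \ref{assume-uniform-as}.

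For sufficiency, feed the sequence furnished by Assumption \ref{assume-uniform-as} into Lemma \ref{lem-iterate-converge} to obtain an $h$-branching sequential persuasion $B^* = (C_n, \pi^*_n, \beta^*_n, \eta^*_n)_{n \in \mathbb{N}}$ with $\Pr[B^* \text{ terminates after } n_\epsilon \text{ steps}] \ge 1 - \epsilon$ for every $\epsilon > 0$. I claim $\mathcal{V}(B^*) = v_\infty(\mu)$: split each $\mathcal{V}(B_{k_l})$ at depth $n_\epsilon$ into the sum over already-terminated branches plus a residual of mass at most $\epsilon$ contributing at most $\epsilon \overline V$, then apply upper semicontinuity of $v$ together with the pointwise convergences $\pi^{(k_l)}_{n_\epsilon} \to \pi^*_{n_\epsilon}$ and $\beta^{(k_l)}_{n_\epsilon} \to \beta^*_{n_\epsilon}$ on the finite layer $C_{n_\epsilon}$ to obtain
\[
    v_\infty(\mu) = \lim_l \mathcal{V}(B_{k_l}) \le \sum_{c \in C_{n_\epsilon}} \pi^*_{n_\epsilon}(c) \cdot v(\beta^*_{n_\epsilon}(c)) + \epsilon \overline V \le \mathcal{V}(B^*) + 2\epsilon \overline V.
\]
Letting $\epsilon \to 0$ gives $\mathcal{V}(B^*) \ge v_\infty(\mu)$, and the reverse inequality holds since every realization of $B^*$ obtained by fixing tie-breaks at nodes is an infinite sequential persuasion bounded by $v_\infty(\mu)$.

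For the Markov refinement, Theorem \ref{thm-limit-characterize} places $v_\infty$ in $G$, so $(v_\infty(b_n))_n$ is a bounded supermartingale along the belief process of $B^*$. Since $B^*$ terminates almost surely at a finite stopping time $\tau$, optional stopping gives $v_\infty(\mu) = \mathcal{V}(B^*) = \mathbb{E}[v(b_\tau)] \le \mathbb{E}[v_\infty(b_\tau)] \le v_\infty(\mu)$, so equality holds throughout. This forces the supermartingale to be a martingale on every branch used with positive probability, hence every $\eta^*_n(c)$ used is exact, and $b_\tau \in N$ almost surely. Now consider the MDP whose states are the at most countable set of beliefs reachable from $\mu$ by experiments witnessed in $B^*$, whose actions at each state are the exact experiments used there by $B^*$, and whose reachability target is the set $Z$ of terminating beliefs, a subset of $N$. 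Since $B^*$ achieves reachability probability $1$, the classical theorem on countable-state reachability MDPs yields a memoryless deterministic strategy $\rho^*$ that achieves the same, which defines a Markov sequential persuasion compatible with $(\mu, D, Z, \rho^*)$ for the corresponding intermediate set $D$; by Lemma \ref{lem-Markov-optimal}, this is an optimal Markov sequential persuasion.

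The main obstacle lies in the sufficiency step: Lemma \ref{lem-iterate-converge} supplies only finite-depth convergence, whereas ``terminates after $c$'' depends on the infinite suffix of the tree below $c$. The workaround used above is to truncate at the finite horizon $n_\epsilon$ and absorb the infinite tail uniformly through the quantitative termination bound, so the termination indicators themselves need not converge. A secondary subtlety is the MDP reduction, which relies on the countability of beliefs reachable from $\mu$ by finite-support experiments to bypass measurable-selection issues in the reachability theorem.
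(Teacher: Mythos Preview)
Your proposal is correct and follows essentially the same architecture as the paper's proof: necessity via truncating an optimal $S^*$ using Lemma \ref{lem-AS-terminate}; sufficiency via Lemma \ref{lem-iterate-converge} to build $B^*$, an upper-semicontinuity estimate at the finite layer $C_{n_\epsilon}$ to get $\mathcal{V}(B^*)=v_\infty(\mu)$, followed by a countable-state reachability-MDP reduction to extract a memoryless deterministic (hence Markov) optimal policy.

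The one place where you are more explicit than the paper is the exactness step. The paper simply asserts that every experiment $B^*$ takes with positive probability is exact and that every terminal belief lies in $N$, whereas you derive this via the supermartingale $(v_\infty(b_n))_n$ and optional stopping (valid since $v_\infty$ is bounded and $\tau<\infty$ a.s.). This is a clean and self-contained justification; the paper's ``it is easy to obtain'' is really the same computation in disguise. Your handling of the residual $\epsilon$-mass at depth $n_\epsilon$ and the citation of Lemma \ref{lem-Markov-optimal} to certify optimality of the resulting Markov persuasion also mirror the paper. No substantive differences in route or difficulty.
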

Lemma \ref{lem-iterate-converge} can also be used to prove the following result about the convergence rate of $(v_n)_{n \in \mathbb{N}}$. For $B^*$ starting from $\mu$, we have $\mathcal{V}(B^*) = v_\infty(\mu)$, and the termination rate of $B^*$ satisfies the same $(\epsilon, n_\epsilon)$ as the sequence in Assumption \ref{assume-uniform-as}. By taking the first $n_\epsilon$ steps of $B^*$ and eliminating possible randomness by choosing the candidate with the highest utility, we obtain the following bound. Tight example for any $(\epsilon, n_\epsilon)$ can be easily constructed using $v$ taking value in $\{\underline V, \overline V\}$, and $F$ that only allows the belief to enter areas where $v$ takes value $\overline V$ after wandering outside the areas for $n_\epsilon$ steps.
\begin{coro}
    For $(\mu, F, v)$ satisfying Assumption \ref{assume-closed-F} and \ref{assume-uniform-as}, for any $\epsilon > 0$ and the corresponding $n_\epsilon$ given in Assumption \ref{assume-uniform-as}, we have
    \[
        v_{n_\epsilon}(\mu) \ge v_\infty(\mu) - \epsilon \cdot (\overline V - \underline V).
    \]
\end{coro}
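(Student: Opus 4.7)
The plan is to leverage the $h$-branching sequential persuasion $B^*$ produced by Lemma~\ref{lem-iterate-converge}. By the proof of Theorem~\ref{thm-opt-exist}, this $B^*$ satisfies $\mathcal{V}(B^*) = v_\infty(\mu)$, and by Lemma~\ref{lem-iterate-converge} itself the termination rate obeys $\Pr[B^* \text{ terminates after } n_\epsilon \text{ steps}] \ge 1 - \epsilon$. From $B^*$ I would extract an $n_\epsilon$-step sequential persuasion $S^{(n_\epsilon)}$ whose expected utility falls short of $v_\infty(\mu)$ by at most $(\overline V - \underline V)\epsilon$, from which $v_{n_\epsilon}(\mu) \ge \mathcal{V}(S^{(n_\epsilon)})$ yields the corollary.

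Write $T_{n_\epsilon}$ for the event that $B^*$ terminates by step $n_\epsilon$. Once a branch terminates, all later experiments on that branch are trivial and the belief remains fixed, so on $T_{n_\epsilon}$ we have $v(b_n) = v(b_{n_\epsilon})$ for all $n \ge n_\epsilon$. Splitting $\mathcal{V}(B^*)$ by whether $T_{n_\epsilon}$ occurs, the contribution of $T_{n_\epsilon}$ equals $\mathbb{E}[\mathbb{I}[T_{n_\epsilon}] \cdot v(b_{n_\epsilon})]$ while the contribution of $\neg T_{n_\epsilon}$ is bounded above by $\overline V \cdot \Pr[\neg T_{n_\epsilon}]$, since $v \le \overline V$ and the additional mass of terminating paths beyond step $n_\epsilon$ is at most $\Pr[\neg T_{n_\epsilon}]$. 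Rearranging gives
\[
    \mathbb{E}[\mathbb{I}[T_{n_\epsilon}] \cdot v(b_{n_\epsilon})] \ge v_\infty(\mu) - \overline V \cdot \Pr[\neg T_{n_\epsilon}].
\]

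I then take the first $n_\epsilon$ steps of $B^*$ and derandomize into a valid $n_\epsilon$-step sequential persuasion $S^{(n_\epsilon)}$. Under the original $B^*$ the unconditional $n_\epsilon$-step value satisfies, using $v \ge \underline V$ on the unterminated branches,
\[
    \mathbb{E}[v(b_{n_\epsilon})] \ge \mathbb{E}[\mathbb{I}[T_{n_\epsilon}] \cdot v(b_{n_\epsilon})] + \underline V \cdot \Pr[\neg T_{n_\epsilon}] \ge v_\infty(\mu) - (\overline V - \underline V) \cdot \Pr[\neg T_{n_\epsilon}] \ge v_\infty(\mu) - (\overline V - \underline V)\epsilon.
\]
The derandomization preserves this value: at any history where the $h$-branching tree assigns different experiments to two children representing the same belief, one picks the subtree with the higher conditional expectation of $v(b_{n_\epsilon})$, which can only raise the total. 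Then $v_{n_\epsilon}(\mu) \ge \mathcal{V}(S^{(n_\epsilon)}) \ge \mathbb{E}[v(b_{n_\epsilon})]$, completing the argument.

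The principal obstacle is the derandomization step, since an $h$-branching sequential persuasion does not a priori correspond to a deterministic history-to-experiment mapping required by a genuine $n_\epsilon$-step sequential persuasion. However, this is the standard ``best continuation'' argument already used to go from $B^*$ to an optimal Markov persuasion in the proof of Theorem~\ref{thm-opt-exist}, and it is lossless in expected utility; everything else is a direct decomposition of $\mathcal{V}(B^*)$ using the termination guarantee of Lemma~\ref{lem-iterate-converge}.
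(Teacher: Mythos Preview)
Your proposal is correct and follows essentially the same approach as the paper: invoke the $h$-branching persuasion $B^*$ from Lemma~\ref{lem-iterate-converge} (with $\mathcal{V}(B^*)=v_\infty(\mu)$ from the proof of Theorem~\ref{thm-opt-exist} and termination probability at least $1-\epsilon$ after $n_\epsilon$ steps), truncate to $n_\epsilon$ steps, and derandomize by selecting the best continuation. Your explicit $\overline V/\underline V$ decomposition is exactly the computation the paper's one-line sketch leaves implicit.
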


\bibliographystyle{plainnat}
\bibliography{references}

\appendix
\section{Proof of Lemma \ref{lem-AS-terminate}}
\begin{proof}
    Suppose $\lim_{n \to \infty} \Pr[S \text{ terminates after } n \text{ steps}] = 1-\delta$ for some $\delta > 0$. Take a sufficiently small $\epsilon$ such that $\epsilon \cdot \overline V < (\delta + \epsilon) \cdot \underline V$, since the probability of termination after $n$ steps is increasing in $n$, there exists $n_\epsilon > 0$ such that 
    \[
        \lim_{n \to \infty} \Pr[S \text{ terminates after } n \text{ steps}] < \Pr[S \text{ terminates after } n_\epsilon \text{ steps}] + \epsilon.
    \] 
    Consider the $n_\epsilon$-step sequential persuasion $S^{(n_\epsilon)}$ given by the first $n_\epsilon$ steps of $S$, we have
    \begin{align*}
        \mathcal{V}(S) \le& \sum_{\xi \in \Xi_{n_\epsilon}} [\mathbb{I}[S \text{ terminates after } \xi] \cdot \Pr[\xi] \cdot v(\text{last}(\xi))] + \epsilon \cdot \overline V \\
        \le& \sum_{\xi \in \Xi_{n_\epsilon}} [\mathbb{I}[S \text{ terminates after } \xi] \cdot \Pr[\xi] \cdot v(\text{last}(\xi))] + (\delta + \epsilon) \cdot \underline V \\
        =& \mathcal{V}(S^{(n_\epsilon)}).
    \end{align*}
\end{proof}

\section{Proof of Lemma \ref{lem-iterate}}
\begin{proof}
    For $n \ge 1$ and any $S^{(n)} \in \mathcal{S}^{(n)}_p$, consider the experiment $e = (\lambda_j, p_j)_{j \in [m]}$ that $S^{(n)}$ takes at the first step, which is in $\mathcal{F}(p)$. For every $j \in [m]$, the second to the $n$-th steps of $S^{(n)}$ induces an element of $\mathcal{S}^{(n-1)}_{p_j}$ with probability $\lambda_j$. Therefore,
    \[
        \mathcal{V}(S^{(n)}) \le \sum_{j=1}^m \lambda_j v_{n-1}(p_j),
    \]
    and by taking supremum over $\mathcal{S}^{(n)}_p$, the RHS becomes the supremum over all valid first steps starting from $p$, which is $\mathcal{F}(p)$, and this gives 
    \[
        v_{n}(p) \le \sup_{(\lambda_j, p_j)_{j \in [m]} \in \mathcal{F}(p)} \sum_{j=1}^m \lambda_j v_{n-1}(p_j), \ \forall n \ge 1, \forall p \in \Delta(\Omega).
    \]
    Then suppose the LHS is strictly smaller than the RHS. Denote $\epsilon \coloneqq \sup_{(\lambda_j, p_j)_{j \in [m]} \in \mathcal{F}(p)} \sum_{j=1}^m \lambda_j v_{n-1}(p_j) - v_{n}(p) > 0$, there exists $e^* = (\lambda^*_j, p^*_j)_{j \in [m^*]} \in \mathcal{F}(p)$ such that
    \[
        v_{n}(p) < \sum_{j=1}^{m^*} \lambda^*_j v_{n-1}(p^*_j) - \epsilon,
    \]
    and by definition of $v_{n-1}$, for every $j \in [m^*]$, there exists $S^{(n-1)}_j \in S^{(n-1)}_{p^*_j}$ such that $\mathcal{V}(S^{(n-1)}_j) > v_{n-1}(p^*_j) - \epsilon$. Therefore, 
    \[
        \sum_{j=1}^{m^*} \lambda^*_j \mathcal{V}(S^{(n-1)}_j) > \sum_{j=1}^{m^*} \lambda^*_j v_{n-1}(p^*_j) - \epsilon > v_n(p).
    \]
    Consider the $n$-step sequential persuasion given by: Take $e^*$ for the first step, and choose the rest $n-1$ steps according to $S^{(n-1)}_j$ if the belief after $e^*$ is $p_j$, for every $j \in [m^*]$. The expected utility of this $n$-step sequential persuasion is equal to the LHS of the above inequality, which gives
    \[
        v_{n}(p) \ge \sum_{j=1}^{m^*} \lambda^*_j \mathcal{V}(S^{(n-1)}_j),
    \]
    a contradiction. We conclude that 
    \[
        v_{n}(p) = \sup_{(\lambda_j, p_j)_{j \in [m]} \in \mathcal{F}(p)} \sum_{j=1}^m \lambda_j v_{n-1}(p_j).
    \]
\end{proof}

\section{Proof of Lemma \ref{lem-upper-semicont}}
\begin{proof}
    Since $v$ takes values in $[\underline V, \overline V]$, every function in $(v_n)_{n \ge 1}$ and the limit $v_\infty$ also take values in $[\underline V, \overline V]$.
    
    By definition $v_0 = v$ is upper semicontinuous. Assume that for some $k \ge 1$, $v_{k-1}$ is upper semicontinuous, but $v_k$ is not. Then there exists some $p \in \Delta(\Omega)$, some sequence $(p_i)_{i \ge 1}$ that converges in total variation (recall that we assign total variation metric to $\Delta(\Omega)$) to $p$, and some $C \in \mathbb{R}$ s.t. $v_k(p) < C$, $v_{k}(p_i) > C, \forall i \ge n$, and we can let $v_k(p_i) > p_{k-1}(p_i), \forall i \ge 1$ by neglecting all $p_i$ in the sequence with $v_k(p_i) = p_{k-1}(p_i)$. For any $i \ge 1$, since $v_{k}(p_i) > C$, there exists some $e_i = (\lambda_j^{(i)}, p_j^{(i)})_{j \in [m_i]} \in \mathcal{F}(p_i)$ s.t. 
    \[
        \sum_{j \in [m_i]} \lambda_j^{(i)} v_{k-1}(p_j^{(i)}) > C.
    \]
    Prokhorov's theorem guarantees the existence of some subsequence of $(e_i)_{i \ge 1}$ that weakly converges to some $e^* \in \Delta(\Delta(\Omega))$. By Assumption \ref{assume-closed-F}, we have $e^* \in F$ and the finite-support expression $e^* = (\lambda_j^*, p_j^*)_{j \le m^*}$ with $m^* \le h$.
    
    Since $v_{k-1}$ is bounded and upper semicontinuous, and $(e_i)_{i \ge 1}$ weakly converges to $e^*$, we have
    \begin{equation}\label{eq-limit-experiment}
        \sum_{j \in [m^*]} \lambda^*_j v_{k-1}(p^*_j) \ge C.
    \end{equation}
    Moreover, since the projection mapping $I_\omega(p) \coloneqq p(\omega)$ is continuous and bounded, and $\sigma(e_i) = p_i, \forall i$, we have $\sigma(e^*) = p$. This together with equation (\ref{eq-limit-experiment}) gives $ v_k(p) \ge C$, which contradicts to our assumption. We conclude that $v_k$ is upper semicontinuous. By induction over $k$ we have $v_n$ is upper semicontinuous for all $n \in \mathbb{N}$.
    
    The sequence of upper semicontinuous functions $(v_n)_{n \in \mathbb{N}}$ pointwise converges to $v_\infty$. If the convergence is uniform, then $v_\infty$ is also upper semicontinuous. For any prior $\mu \in \Delta(\Omega)$, we have $H(\mu) \le \log(|\Omega|)$. Taking any experiment in $F$ reduces the expected entropy of belief by at least $\delta$. Therefore, for any (finite-step or infinite) sequential persuasion starting from $\mu$ and any $n \ge 1$, the probability that it takes more that $n$ nontrivial experiments is upper bounded by $\log(|\Omega|) / (n\delta)$. For $m > n$, to obtain the supremum in the definition of $v_m$, it is without loss of generality to consider those $m$-step sequential persuasions that only take trivial experiments after termination. For any such infinite sequential persuasion $S^{(m)} \in \mathcal{S}^{(m)}_\mu$, with probability $1-\frac{\log(|\Omega|)}{n\delta}$ only trivial experiments are taken after obtaining $n$-step belief $b_n$, therefore the final belief $b_m$ satisfies:
    \[
        \Pr[b_m \ne b_{n}] \le \frac{\log(|\Omega|)}{n\delta}.
    \]
    Note that by simulating the first $n$ steps of $S^{(m)}$, we obtain a valid $n$-step sequential persuasion $\tilde S^{(n)}$ with
    \[
        \mathcal{V}(S^{(m)}) - \mathcal{V}(\tilde S^{(n)}) \le \frac{\log(|\Omega|)}{n\delta} \cdot \overline V.
    \]
    And taking supremum over $S^{(m)} \in \mathcal{S}^{(m)}_\mu$, then over $\mu \in \Delta(\Omega)$ gives
    \[
        \sup_{\mu \in \Delta(\Omega)} |v_m(\mu) - v_n(\mu)| \le \frac{\log(|\Omega|)}{n\delta} \cdot \overline V.
    \]
    By Cauchy's criterion, $(v_n)_{n \in \mathbb{N}}$ uniformly converges to $v_\infty$.
\end{proof}

\section{Proof of Lemma \ref{lem-Markov-optimal}}
\begin{proof}
    By Definition \ref{define-markov}, $S = (\Xi_0, \phi_1, \Xi_1, \ldots)$ terminates after every history $\xi$ with $\text{last}(\xi) \in Z$. Moreover, by the condition on $\rho$ in the definition, for every $n \in \mathbb{N}$ and every $\xi \in \Xi_n$ with $\Pr[\xi] > 0$, every belief along $\xi$ is in $D \cup Z$, therefore every experiment in $\xi$ is guided by $\rho$, and is therefore exact. We use induction over $n$ to prove the following equality: For every $n \in \mathbb{N}$, the $n$-step belief $b_n$ induced by $S$ satisfies
    \[
        \mathbb{E}[\mathbb{I}[b_n \in Z] \cdot v(b_n)] + \mathbb{E}[\mathbb{I}[b_n \notin Z] \cdot v_\infty(b_n)] = v_\infty(\mu).
    \]
    For $n=0$, $b_n$ is equal to $\mu$ with probability 1. If $\mu \in Z$, by $Z \subseteq N$ the first term on the left hand side is equal to $v(\mu) = v_\infty(\mu)$, and the second term is equal to zero. If $\mu \notin Z$, the first term is equal to zero and the second term is equal to $v_\infty(\mu)$, therefore the equality holds.

    Suppose the equality holds for $n-1$. The belief $b_n$ is given by applying one experiment according to $b_{n-1}$:
    \begin{align*}
        & \mathbb{E}[\mathbb{I}[b_n \in Z] \cdot v(b_n)] + \mathbb{E}[\mathbb{I}[b_n \notin Z] \cdot v_\infty(b_n)] \\
        =& \sum_{\xi \in \Xi_{n-1}} \mathbb{E}_{p \sim \phi_n(\xi)}[\mathbb{I}[p \in Z] \cdot v(p) + \mathbb{I}[p \notin Z] \cdot v_\infty(p)] \cdot \Pr[\xi] \\
        =& \mathbb{E}_{b_{n-1}} \big[ \mathbb{E}_{p \sim \rho(b_{n-1})}[\mathbb{I}[p \in Z] \cdot v(p) + \mathbb{I}[p \notin Z] \cdot v_\infty(p)] \big] \\
        =& \mathbb{E}_{b_{n-1}} \big[ \mathbb{E}_{p \sim \rho(b_{n-1})}v_\infty(p) \big] \\
        =& \mathbb{E} [v_\infty(b_{n-1})] \\
        =& \mathbb{E}[\mathbb{I}[b_{n-1} \in Z] \cdot v(b_{n-1})] + \mathbb{E}[\mathbb{I}[b_{n-1} \notin Z] \cdot v_\infty(b_{n-1})] \\
        =& v_\infty(\mu),
    \end{align*}
    which proves the induction hypothesis for $n$. The third line comes from $\Pr[\xi] > 0$ only if $\text{last}(\xi) \in D \cup Z$, the fourth and the sixth line comes from $Z \subseteq N$, and the fifth line comes from $\rho$ only takes exact experiments. Therefore the equality holds for every $n \in \mathbb{N}$.
    
    Since $S \in \mathcal{S}_\mu$ almost surely terminates, we have $\lim_{n \to \infty} \Pr[b_n \notin Z] = 0$. This together with boundedness of $v_\infty$ gives
    \[
        \mathcal{V}(S) \ge \lim_{n \to \infty} \mathbb{E}[\mathbb{I}[b_n \in Z] \cdot v(b_n)] = v_\infty(\mu),
    \]
    therefore $S$ is optimal.

\end{proof}

\section{Proof of Lemma \ref{lem-branch-represent}}
\begin{proof}
    $r_n$ and $(\pi_n, \beta_n, \eta_n)$ can be iteratively constructed for $n=0,1,2,\ldots$. For $n=0$ it suffices to let $r_0(c_{0,1}) = (\mu)$, $\beta_0(c_{0,1}) = \mu$, $\eta_0(c_{0,1}) = \phi_1((\mu))$, $\pi_0(c_{0,1}) = 1$. 
    
    Suppose we have obtained $r_n$ and $(\pi_n, \beta_n, \eta_n)$. For every $c_{n,m} \in C_n$, $\eta_{n}(c_{n,m}) = (\lambda_j, p_j)_{j \in [m]}$ generates beliefs $(p_j)_{j \in [m]}$ with $m \le h$, and for every $j \in [m]$, we assign each $p_j$ to a unique $c' \in \Gamma(n,m)$ so that $\beta_{n+1}(c') = p_j$, and let $\pi_{n+1}(c') = \pi_n(c_{n,m}) \cdot \lambda_j$. If $m < h$, we can let $\beta_{n+1}$ take arbitrary beliefs in $(p_j)_{j \in [m]}$ for the rest elements of $\Gamma(n,m)$, and let $\pi_{n+1}$ take zero on these elements. This finishes the construction of $\beta_{n+1}$ and $\pi_{n+1}$. $r_{n+1}$ is given by $r_{n+1}(c) \coloneqq r_n(c_{n,m}) \oplus (\eta_n(c_{n,m}), \beta_{n+1}(c))$ for every $c_{n,m} \in C_n$ and every $c \in \Gamma(n,m)$. $\eta_{n+1}$ is given by $\eta_{n+1}(c) \coloneqq \phi_{n+2}(r_{n+1}(c))$ for every $c \in C_{n+1}$.
\end{proof}

\section{Proof of Lemma \ref{lem-iterate-converge}}

\begin{lemma}\label{lem-step-converge}
    Under Assumption \ref{assume-closed-F}, for any $n \in \mathbb{N}$, if we have a sequence of $h$-branching sequential persuasions $(B_{k})_{k \ge 1}$ starting from $\mu$, with each $B_{k} = (C_n, \pi^{(k)}_n, \beta^{(k)}_n, \eta^{(k)}_n)_{n \in \mathbb{N}}$, such that
    \begin{enumerate}
        \item for every $0 \le i \le n-1$, $(\eta^{(k)}_{i-1})_{k \ge 1}$ pointwise weakly converges to some $\eta_{i-1}: C_{i-1} \rightarrow F$;
        \item for every $0 \le i \le n$, $(\pi^{(k)}_i)_{k \ge 1}$ pointwise converges to some probability distribution $\pi_i$ over $C_i$; 
        \item  for every $0 \le i \le n$, $(\beta^{(k)}_i)_{k \ge 1}$ pointwise converges in total variation distance to some $\beta_i: C_i \rightarrow \Delta(\Omega)$.
    \end{enumerate}
    Then, there exists a subsequence indexed by $(k_l)_{l \ge 1}$, a mapping $\eta_n: [h^{n}] \rightarrow F$, a probability distribution $\pi_{n+1}$ over $[h^{n+1}]$, and a mapping $\beta_{n+1}: [h^{n+1}] \rightarrow \Delta(\Omega)$ s.t. 
    \begin{enumerate}
        \item $(\eta^{(k_l)}_{n})_{l \ge 1}$ pointwise weakly converges to $\eta_{n}$.
        \item $(\pi^{(k_l)}_{n+1})_{l \ge 1}$ pointwise converges to $\pi_{n+1}$; 
        \item $(\beta^{(k_l)}_{n+1})_{l \ge 1}$ pointwise converges in total variation to $\beta_{n+1}$;
        \item $\sigma(\eta_n(c_{n, m})) = \beta_n(c_{n, m})$ and $\tau(\eta_n(c_{n, m})) \subseteq \{\beta_{n+1}(c) | c \in \Gamma(n, m)\}$ for every $n \in \mathbb{N}$ and $1 \le m \le h^n$.
        \item For every $n \in \mathbb{N}$ and $1 \le m \le h^n$, for every $p$ assigned with probability $\lambda \ge 0$ by $\eta_n(c_{n, m})$, $\sum_{c \in \Gamma(n, m)} \mathbb{I}[\beta_{n+1}(c) = p] \cdot \pi_{n+1}(c) = \pi_n(c_{n, m}) \cdot \lambda$.
    \end{enumerate}
\end{lemma}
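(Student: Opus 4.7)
The plan is to extract convergent subsequences for $\eta^{(k)}_n$, $\pi^{(k)}_{n+1}$, and $\beta^{(k)}_{n+1}$ using compactness at each of the finitely many vertices, and then verify that the limits inherit the two compatibility conditions of Definition \ref{def-h-branch}. The entire argument rests on the fact that $|C_n| = h^n$, $|C_{n+1}| = h^{n+1}$, and $|\tau(e)| \le h$ for every $e \in F$ (by Assumption \ref{assume-closed-F}), so all subsequence extractions are over finite index sets.

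First, since each $\pi^{(k)}_{n+1}(c) \in [0,1]$ and each $\beta^{(k)}_{n+1}(c) \in \Delta(\Omega)$ lies in a compact space, a finite diagonal extraction produces a subsequence along which $\pi^{(k)}_{n+1}(c) \to \pi_{n+1}(c)$ and $\beta^{(k)}_{n+1}(c) \to \beta_{n+1}(c)$ in total variation for every $c \in C_{n+1}$. For each $c_{n,m} \in C_n$, the measures $\eta^{(k)}_n(c_{n,m}) \in F \subseteq \Delta(\Delta(\Omega))$ form a tight family because $\Delta(\Omega)$ is compact; Prokhorov's theorem together with Assumption \ref{assume-closed-F} yields a sub-subsequence converging weakly to some $\eta_n(c_{n,m}) \in F$ with support size at most $h$. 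Iterating over the finite collection $C_n$ produces one subsequence $(k_l)_{l \ge 1}$ achieving all of these convergences simultaneously.

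Next, I would verify compatibility. Since $p \mapsto p(\omega)$ is continuous and bounded on $\Delta(\Omega)$, the expectation $\sigma$ is continuous under weak convergence, so
\[
\sigma(\eta_n(c_{n,m})) = \lim_{l \to \infty} \sigma(\eta^{(k_l)}_n(c_{n,m})) = \lim_{l \to \infty} \beta^{(k_l)}_n(c_{n,m}) = \beta_n(c_{n,m}).
\]
For the support condition and item 5, I would represent each $\eta^{(k_l)}_n(c_{n,m})$ as a padded tuple $(\lambda^{(l)}_{m,j}, p^{(l)}_{m,j})_{j \in [h]}$ (padding with zero weights if the true support is smaller). The pre-limit compatibility of $B_{k_l}$ forces each $p^{(l)}_{m,j}$ to equal $\beta^{(k_l)}_{n+1}(c)$ for some $c \in \Gamma(n,m)$; by pigeonhole on the finite set $\Gamma(n,m)$ a further subsequence fixes an index-to-child assignment $j \mapsto c(m,j)$, so along this subsequence $p^{(l)}_{m,j} = \beta^{(k_l)}_{n+1}(c(m,j))$ for all $l$. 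Taking limits, and further stabilizing each indicator $\mathbb{I}[\beta^{(k_l)}_{n+1}(c) = p^{(l)}_{m,j}]$ by one more finite pigeonhole extraction, the weak limit of $\eta^{(k_l)}_n(c_{n,m})$ coincides (after merging duplicate beliefs) with the experiment carrying mass $\lambda^*_{m,j} = \lim_l \lambda^{(l)}_{m,j}$ on $\beta_{n+1}(c(m,j))$, which yields both $\tau(\eta_n(c_{n,m})) \subseteq \{\beta_{n+1}(c) : c \in \Gamma(n,m)\}$ and the consistency identity in item 5.

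The main obstacle I expect is the structural matching between pre-limit atoms and children: weak convergence does not by itself preserve the discrete correspondence, because distinct atoms of $\eta^{(k)}_n(c_{n,m})$ may collapse to a single atom in the limit and distinct children may acquire the same limiting belief. Resolving this is precisely what forces the repeated pigeonhole refinements over $\Gamma(n,m)$, which are legitimate only because Assumption \ref{assume-closed-F} caps both the support size and the branching factor by $h$. Once this combinatorial bookkeeping is in place, all other steps reduce to continuity of $\sigma$ and the closedness of $F$ under weak limits.
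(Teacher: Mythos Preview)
Your proposal is correct and follows the same overall outline as the paper: finite diagonal extraction over the $h^{n+1}$ children for $\pi_{n+1}$ and $\beta_{n+1}$, followed by verification of the compatibility and consistency constraints. The one substantive difference is how the convergence of $\eta_n$ is obtained. You extract it separately via Prokhorov and then reconcile the limiting atoms with the limiting children's beliefs by repeated pigeonhole refinements, which you correctly flag as the main bookkeeping burden. The paper instead observes that, by the consistency condition of Definition~\ref{def-h-branch}, each $\eta^{(k)}_n(c_{n,m})$ is already determined (as a measure on $\Delta(\Omega)$) by $\pi^{(k)}_n(c_{n,m})$, $\pi^{(k)}_{n+1}|_{\Gamma(n,m)}$, and $\beta^{(k)}_{n+1}|_{\Gamma(n,m)}$; hence once those converge pointwise, $\eta^{(k)}_n(c_{n,m})$ converges weakly automatically, and the limit inherits compatibility and consistency with no atom-matching required. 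Your route is a bit longer but equally valid; the paper's route buys you exactly the avoidance of the pigeonhole step you anticipated as the chief obstacle.
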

\begin{proof}
    Note that for finite $\Omega$, every infinite sequence in the compact metric space $\Delta(\Omega)$ contains some subsequence that converges in total variation distance. Therefore, we can start from the initial sequence indexed by $(k_l)_{l \ge 1} = 1,2,3,\ldots$, for every $1 \le m \le h^{n+1}$, iteratively take subsequence of the current sequence, where each iteration step makes the obtained subsequence of $\beta_{n+1}^{(k_l)}(c_{n+1,m})$ weakly converge. After $h^{n+1}$ such steps, we obtain a subsequence indexed by $(k^{(1)}_l)_{l \ge 1}$ and function $\beta_{n+1}$ given by the point-wise weak limits of $(\beta_{n+1}^{(k^{(1)}_l)})_{l \ge 1}$ on $[h^{n+1}]$. 
    
    Similarly, since the $(h^{n+1}-1)$-dimensional simplex is compact, we can start from indexes $(k^{(1)}_l)_{l \ge 1}$, iteratively take subsequence to let the subsequence of $\pi_{n+1}(c_{n+1,m})$ converge and assign the limit to $\pi_{n+1}(c_{n+1,m})$, for every $m \in [h^{n+1}]$. After all $h^{n+1}$ iterations, the obtained subsequence is indexed by $(k^{(2)}_l)_{l \ge 1}$.
    
    Now we obtained a sequence $(B_{k^{(2)}_l})_{l \ge 1}$ in which $(\pi_i^{(k^{(2)}_l)})_{l \ge 1}$ pointwise converges to $\pi_i$ and $(\beta_i^{(k^{(2)}_l)})_{l \ge 1}$ pointwise converges to $\beta_i$, for every $i=n, n+1$. Moreover, By definition of $h$-branching sequential persuasion, every $B_{k^{(2)}_l}$ satisfies compatibility of belief transitions and consistency of probability distributions about the $n$-th step transition $\eta_n$. Therefore, we obtain that $(\eta_{n+1}^{(k^{(2)}_l)})_{l \ge 1}$ pointwise weakly converges to some $\eta_n$, and the limiting functions $(\pi_n, \pi_{n+1}, \beta_n, \beta_{n+1}, \eta_n)$ also satisfies compatibility of belief transitions and consistency of probability distributions.
\end{proof}

\begin{proof}[Proof of Lemma \ref{lem-iterate-converge}]
    We construct the limiting $h$-branching sequential persuasion $B^* = (C_n, \pi^{*}_n, \beta^{*}_n, \eta^{*}_n)_{n \in \mathbb{N}}$ step-by-step, by iteratively applying Lemma \ref{lem-step-converge} to take subsequence of $(B_k)_{k \ge 1}$. 

    For $n=0$, $C_0$ contains $c_{0,1}$ as the only element and for every $k \ge 1$, $B_{k} = (C_n, \pi^{(k)}_n, \beta^{(k)}_n, \eta^{(k)}_n)_{n \in \mathbb{N}}$ satisfies $C_n = \{c_{0,1}\}$ and $\beta_0^{(k)}(c_{0,1}) = \mu$, therefore $(\pi_0^{(k)}, \beta_0^{(k)})$ are identical for every $k$. Clearly, the condition of Lemma \ref{lem-step-converge} holds for $n=0$.
    
    For any $n_0 \in \mathbb{N}$, suppose there exists a subsequence of $(B_k)_{k \ge 1}$ given by $(B_{k_l})_{l \ge 1}$ that satisfies the conditions of Lemma \ref{lem-step-converge} for $n=n_0$. By Lemma \ref{lem-step-converge}, we can further extract a subsequence from $(B_{k_l})_{l \ge 1}$ indexed by $(k'_l)_{l \ge 1}$, so that $(\eta^{(k_l)}_{n})_{l \ge 1}$ pointwise weakly converges to some $\eta_{n}: [h^n] \rightarrow F$, $(\pi^{(k_l)}_{n+1})_{l \ge 1}$ pointwise converges to some probability distribution $\pi_{n+1}$ over $[h^{n+1}]$, and $(\beta^{(k_l)}_{n+1})_{l \ge 1}$ pointwise converges in total variation to some $\beta_{n+1}: [h^{n+1}] \rightarrow \Delta(\Omega)$. This satisfies the conditions of Lemma \ref{lem-step-converge}. 
    

    Therefore, for $n=0, 1, 2, \ldots$, we can iteratively apply Lemma \ref{lem-step-converge} to iteratively take subsequence of $(B_k)_{k \ge 1}$, and obtain functions $(\pi_n, \beta_n, \eta_n)_{n \in \mathbb{N}}$. We claim that $B^* = (C_n, \pi_n, \beta_n, \eta_n)_{n \in \mathbb{N}}$ is an $h$-branching sequential persuasion satisfying the requirements of Definition \ref{def-h-branch}, that the belief transitions are compatible and the probability distributions are consistent. This is guaranteed by Lemma \ref{lem-step-converge}, for every $n \in \mathbb{N}$.

    
    The last step is to show $\lim_{n \to \infty} \Pr[B^* \text{ terminates after } n \text{ steps}] = 1$. By Assumption \ref{assume-uniform-as}, given $\epsilon > 0$, there exists $n_\epsilon > 0$ s.t. every infinite sequential persuasion in $(S_k)_{k \ge 1}$ terminates with probability at least $1-\epsilon$ after $n_\epsilon$ steps. We claim that $B^*$ also terminates after $n_\epsilon$ steps with probability as least $1-\epsilon$. To see this, suppose the claim to be false, that is
    \[
        \sum_{c \in C_{n_\epsilon}} \mathbb{I}[B^* \text{ terminates after } c] \cdot \pi_{n_\epsilon}(c) < 1-\epsilon,
    \]
    then there exists $M > 0$ such that
    \[
        \sum_{c \in C_{n_\epsilon}} \mathbb{I}[\exists n_\epsilon \le n \le M, \exists c' \in C_n: c' \text{ is a descendant of }c, \pi_n(c') > 0, \eta_n(c') \notin T] \cdot \pi_{n_\epsilon}(c) < 1-\epsilon,
    \]
    By the result above, we can find a subsequence of $(B_k)_{k \ge 1}$ indexed by $(k_l)_{l \ge 1}$ such that $\pi^{(k_l)}_n$ pointwise converges to $\pi_n$, and $\eta^{(k_l)}_n$ pointwise weakly converges to $\eta_n$, for every $0 \le i \le n$. Therefore, there exists some sufficiently large $L \ge 1$ such that, 
    \[
        \sum_{c \in C_{n_\epsilon}} \mathbb{I}[\exists n_\epsilon \le n \le M, \exists c' \in C_n: c' \text{ is a descendant of }c, \pi^{(k_{L})}_n(c') > 0, \eta^{(k_{L})}_n(c') \notin T] \cdot \pi_{n^{(k_{L})}_\epsilon}(c) < 1-\epsilon,
    \]
    (Note that if the limiting experiment is nontrivial, then some postfix of the weakly convergent sequence of experiments has to be all nontrivial, but it is possible that a trivial experiment is the weak limit of a sequence of nontrivial experiments.) This shows that $\Pr[B_{k_L} \text{ terminates after } n_\epsilon \text{ steps}] < 1-\epsilon$. 
    
    However, $B_{k_L}$ is a representation of the infinite sequential persuasion $S_{k_L}$, and by Lemma \ref{lem-branch-represent}, if $B_{k_L}$ takes nontrivial experiments between step $n_\epsilon$ and step $M$ by some descendant of $c \in C_{n_\epsilon}$ with positive probability, then $S_{k_L}$ also takes nontrivial experiments with positive probability between step $n_\epsilon$ and step $M$ after history $r_n(c)$, therefore $S_{k_L}$ does not terminate after $r_n(c)$. This gives 
    \[
        \Pr[S_{k_L} \text{ terminates after } n_\epsilon \text{ steps}] < 1-\epsilon,
    \]
    which contradicts to Assumption \ref{assume-uniform-as}. We conclude that for any $\epsilon > 0$, $B^*$ terminates after $n_\epsilon$ steps with probability at least $1-\epsilon$, where $n_\epsilon$ is given by Assumption \ref{assume-uniform-as}.
\end{proof}

\section{Proof of Theorem \ref{thm-opt-exist}}
\begin{proof}
    We are going to prove two claims. (i) Under Assumption \ref{assume-closed-F} and \ref{assume-uniform-as}, there exists an optimal Markov sequential persuasion starting from $\mu$. (ii) Under Assumption \ref{assume-closed-F}, given an optimal infinite sequential persuasion, we can construct a sequence of finite-step sequential persuasions satisfying Assumption \ref{assume-uniform-as}.
    
    To prove (i): Suppose Assumption \ref{assume-closed-F} holds, and $(S^{(n_k)}_k)_{k \ge 1}$ satisfies the condition of Assumption \ref{assume-uniform-as}. Then, the $h$-branching sequential persuasion $B^* = (C_n, \pi^{*}_n, \beta^{*}_n, \eta^{*}_n)_{n \in \mathbb{N}}$ introduced in Lemma \ref{lem-iterate-converge} actually satisfies $\mathcal{V}(B^*) = v_\infty(\mu)$. To see this, take an arbitrary $\epsilon > 0$, Assumption \ref{assume-uniform-as} and the property of $B^*$ guarantees the existence of $n_\epsilon \in \mathbb{N}$ such that
    \[
        \Pr[B^* \text{ terminates after } n_\epsilon \text{ steps}] > 1-\epsilon,
    \]
    and
    \[
        \Pr[S_k \text{ terminates after } n_\epsilon \text{ steps}] > 1-\epsilon, \forall k \ge 1.
    \]
    By Lemma \ref{lem-iterate-converge}, there exists subsequence $(B_{k_l})_{l \ge 1}$ which are representations of $(S_{k_l})_{l \ge 1}$ by Lemma \ref{lem-branch-represent}, so that for every $l \ge 1$, the $n_\epsilon$-step belief $b_{n_\epsilon}^{(k_l)}$ has the same distribution as $\beta^{(k_l)})_{n_\epsilon}(c)$ for $c \sim \pi^{(k_l)}_{n_\epsilon}$, therefore $(b_{n_\epsilon}^{(k_l)})_{l \ge 1}$ converges in distribution to $\beta^*_{n_\epsilon}(c)$ for $c \sim \pi^*_{n_\epsilon}$. Since $v$ is upper-semicontinuous, we have
    \[
        \lim_{l \to \infty} \mathbb{E}[v(b_{n_\epsilon}^{k_l})] = \mathbb{E}_{c \sim \pi^*_{n_\epsilon}}[v(\beta^*_{n_\epsilon}(c))].
    \]
    This further gives
    \begin{align*}
        & \sum_{c \in C_{n_\epsilon}}[\mathbb{I}[B^* \text{ terminates after } c] \cdot \pi^*_{n_\epsilon}(c) \cdot v(\beta^*_{n_\epsilon}(c))] \\
        \ge& \mathbb{E}_{c \sim \pi^*_{n_\epsilon}}[v(\beta^*_{n_\epsilon}(c))] - \epsilon \cdot \overline V \\
        =& \lim_{l \to \infty} \mathbb{E}[v(b_{n_\epsilon}^{(k_l)})] - \epsilon \cdot \overline V \\
        \ge & \lim_{l \to \infty} (\mathcal{V}(S^{(n_{(k_l)})}_{(k_l)}) - \epsilon \cdot \overline V) - \epsilon \cdot \overline V \\
        =& v_\infty(\mu) - 2\epsilon \cdot \overline V,
    \end{align*}
    where the last line comes from the requirement of Assumption \ref{assume-uniform-as}, that the limiting expected utility of $(S^{(n_k)}_k)_{k \ge 1}$ is equal to $v_\infty(\mu)$. Therefore, for every $\epsilon > 0$ and the corresponding $n_\epsilon$,
    \begin{align*}
        \mathcal{V} (B^*) =& \sup_{n \in \mathbb{N}} \sum_{c \in C_{n}}[\mathbb{I}[B^* \text{ terminates after } c] \cdot \pi^*_{n}(c) \cdot v(\beta^*_{n}(c))] \\
        \ge& \sum_{c \in C_{n_\epsilon}}[\mathbb{I}[B^* \text{ terminates after } c] \cdot \pi^*_{n_\epsilon}(c) \cdot v(\beta^*_{n_\epsilon}(c))] \\
        \ge& v_\infty(\mu) - 2\epsilon \cdot \overline V, \forall \epsilon > 0,
    \end{align*}
    from which we conclude that $\mathcal{V}(B^*) = v_\infty(\mu)$.
    
    Now we have obtained an $h$-branching sequential persuasion $B^*$ with $\mathcal{V}(B^*) = v_\infty(\mu)$, and we use it to derive the existence of an optimal Markov sequential persuasion. It is easy to obtain that, all experiments that $B^*$ takes with positive probability are exact experiments. Moreover, for every $n \in \mathbb{N}$, $B^*$ terminates after $n$ steps at belief $p$ with positive probability only if $p \in N$. Consider an MDP with countably infinite states given by: the states are the beliefs that $B^*$ visits with positive probability, the initial state is $\mu$, and for every state $p$, the set of randomized transitions that can be chosen at state $p$ are given by (i) stay at $p$ w.p. 1, and (ii) for every experiment $e = (\lambda_j, p_j)_{j \in [m]}$ that $B^*$ takes with positive probability, a randomized transition that transits to state $p_j$ w.p. $\lambda_j$. The objective function is the undiscounted probability of entering set $N \subseteq \Delta(\Omega)$. $S^*$ is an optimal strategy for this MDP (it is easy to verify that the probability of each $c \in C_{n}$'s occurrence gives a pre-measure over finite histories, so that we can apply Carath\'eodory's extension theorem to define the probability for infinite histories, and conclude $\Pr[B^* \text{enters N}] = 1$). When an optimal randomized strategy of countably infinite MDP with reachability objective exists, there exists an optimal strategy that is memoryless and deterministic \cite{MDP-2017,MDP-2020}, which corresponds to an optimal Markov sequential persuasion by our definition. 

    \noindent
    To prove (ii): Given an optimal infinite sequential persuasion $S = (\Xi_0, \phi_1, \Xi_1, \ldots)$, consider the sequence $(S^{(k)})_{k \ge 1}$ where the $k$-th term $S^{(k)} = (\Xi_0, \phi_1, \Xi_1, \ldots, \phi_k, \Xi_k)$ is given by the first $k$ steps of $S$. By Lemma \ref{lem-AS-terminate}, the termination probability of $S^{(k)}$ goes to 1 as $k$ goes to infinity. It can be easily verified that $(S^{(k)})_{k \ge 1}$ satisfies Assumption \ref{assume-uniform-as}.
\end{proof}


\end{document}